\documentclass[13pt,a4paper]{article} 
 
 \usepackage{amsmath}
 \usepackage{upquote}
 \usepackage{color}
 \usepackage{xcolor}
 \usepackage{enumerate}
  \usepackage{mathrsfs} %font
 \usepackage[document]{ragged2e}
 \usepackage{amssymb}
 \usepackage{mathtools}
 \usepackage{easybmat} %For block matrix
 \usepackage{amsfonts}
 \usepackage{amsthm}
 \usepackage{lipsum}
 \usepackage{afterpage} %blank page
\usepackage{hyperref}%for section sign §
\usepackage{MnSymbol}%black square
\usepackage{wasysym}%black square
\usepackage{accents}
\usepackage[toc]{appendix} 
\usepackage[thinlines]{easytable} %table
\usepackage[font=footnotesize,labelfont=bf]{caption}
\usepackage{cleveref}
\usepackage{pbox}
\usepackage[margin=0.8in]{geometry}

\DeclareCaptionFont{tiny}{\tiny}
\numberwithin{equation}{section}

 %adjust font siaze in math mode

\newcommand\numberthis{\addtocounter{equation}{1}\tag{\theequation}} %numbering only one equation

\usepackage{graphicx}
\usepackage{sectsty}
\usepackage{float}
\usepackage[section]{placeins}
\usepackage{wrapfig}
\usepackage{tocbibind}

\newtheorem{proposition}{Proposition}
\theoremstyle{definition}

\theoremstyle{plain}

\newtheorem{theorem}{Theorem}
\newtheorem{remark}{Remark}
 %For bold bar over letters
\usepackage{hyperref} 
\usepackage[utf8]{inputenc}
\newtheorem{corollary}{Corollary}

\begin{document}

\title{An Analytical Mechanics Approach to the First Law of Thermodynamics and  Construction of a Variational Hierarchy}

\date{}
\author{ \hspace{-16mm}\\ 
\hspace{0mm} \small HAMID SAID \renewcommand*{\thefootnote}
{\fnsymbol{footnote}}\footnote{\textit{e-mail:}   \textsf{hamids@sci.kuniv.edu.kw}} \hspace{1mm} \\
\hspace{0mm} \footnotesize Department of Mathematics, Kuwait University, PO Box 5969, Safat 13060, Kuwait}
\maketitle

\begin{abstract}
\justify A simple procedure is presented to study the conservation of energy equation with dissipation in continuum mechanics in 1D. This procedure is used to transform this nonlinear evolution-diffusion equation into a hyperbolic PDE; specifically, a second order quasi-linear wave equation. An immediate implication of this procedure is the formation of a least action principle for the balance of energy with dissipation. The corresponding action functional enables us to establish a complete analytic mechanics for thermomechanical systems: a Lagrangian-Hamiltonian theory, integrals of motion, bracket formalism, and Noether's theorem. Furthermore, we apply our procedure iteratively and produce an infinite sequence of interlocked variational principles, a \emph{variational hierarchy}, where at each level or iteration the full implication of the least action principle can be shown again.
\end{abstract} 
\begin{flushleft}

\qquad \textbf{Keywords:} continuum mechanics, first law of thermodynamics, least action principle, dissipation, variational hierarchy
\end{flushleft}

\begin{flushleft}

\qquad \textbf{2010 Mathematics Subject Classification}:  37K05, 80M30
\end{flushleft}

\section{Introduction}

\begin{flushleft}
\justify 
Hamilton's principle is, undoubtedly, one of the great insights of physics. While historically it was formulated in the context of classical mechanics \cite{gregory2006classical}, it has been remarkably extended to other field theories such as fluid mechanics, electromagnetism, general relativity and various quantum field theories. It is well-known however that the many consequences of this principle--such as Lagrangian mechanics, Hamilton's equations, and Noether's theorem-- apply only to conservative systems and can not capture the irreversible effects of a general dissipative system, such as the diffusion of heat. 
\end{flushleft}

\begin{flushleft}
\justify 
The purpose of this work is two-fold. First, we construct an action functional whose stationary points satisfies the (nonlinear) conservation of energy equation with heat dissipation in one space dimension. For the first time, to our knowledge, a stationary principle for the first law of thermodynamics analogous to Hamilton's principle of stationary action is formulated in this paper. This will allow us, among other things, to find a bona fide variational principle for the coupled heat equation in classical thermoelasticity and the classic heat equation. Second, we show that this procedure is iterative, which allows for the construction of an infinite number of variational principles. 
\end{flushleft}

\begin{flushleft}
\justify 
Consider one of the most basic equations in all of physics: the classic wave equation in some domain $\mathcal{B} \subseteq \mathbb{R}^n$
\begin{equation} \label{1.1}
\hspace*{16mm} \dfrac{\partial^2 u}{\partial t^2} -c^2 \nabla^2 u = 0 \ , \hspace*{7mm} (\vec{x},t) \in  \mathcal{B} \times (0, \infty) 
\end{equation}
where the scalar $c$ denotes the speed of propagation of the wave. The Lagrangian $\mathcal{L}$ for the above PDE is the difference between the kinetic energy and the (potential) strain energy
\begin{equation*} \label{1.2}
\mathcal{L} = \dfrac{1}{2} \int_{\mathcal{B}} \left( \dfrac{\partial u}{\partial t} \right)^2 - c^2 |\nabla u | ^2 d \mathcal{B} 
\end{equation*}
Then formally by Hamilton's principle of least action 
$$ \delta \int_{0}^{\tau} \mathcal{L} dt= 0 $$
we can recover the wave equation. In other words, the solution $u$ to \eqref{1.1} in $[0,\tau]$ corresponds to the stationary points of $\int_{0}^{\tau} \mathcal{L} dt$. Once $\mathcal{L}$ is defined we can rewrite the wave equation in Euler-Lagrange form
\begin{equation*} \label{1.3}
\dfrac{\partial }{\partial t}  \left( \dfrac{\delta \mathcal{L}}{\delta \dot{u}} \right) - \dfrac{\delta \mathcal{L}}{\delta u} = 0 
\end{equation*}
If function $u$ has compact support on $\mathcal{B}$ (or decays sufficiently fast when $\mathcal{B} = \mathbb{R}^n$) then the total energy of the system is conserved: 
\begin{equation*} \label{1.4}
\partial_t \mathcal{H} = 0 
\end{equation*}
where the total energy is given as
\begin{equation*} \label{1.5}
\mathcal{H} = \dfrac{1}{2} \int_{\mathcal{B}} \left( \dfrac{\partial u}{\partial t} \right)^2 + c^2 |\nabla u | ^2 d \mathcal{B} 
\end{equation*}
\end{flushleft}

\begin{flushleft}
\justify 
These classical results, however, do not have a counterpart for the classic heat equation
\begin{equation} \label{1.6}
 \dfrac{\partial u}{\partial t}  -\alpha \nabla^2 u = 0 
\end{equation}
where constant $\alpha$ is the  thermal diffusivity, and the function $u$ here represents the temperature field. In fact, it is shown \cite{berdichevsky2009variational} that no action exists in the form of 
\begin{equation} \label{1.7}
\int_{\mathcal{B}} L( \partial_t u, \nabla u, x, t) d\mathcal{B}
 \end{equation}
such that \eqref{1.6} can be deduced as the Euler-Lagrange equations of functional having the form give in equation \eqref{1.7}. 
\end{flushleft}

\begin{flushleft}
\justify 
As far as we can tell, the first successful attempt to include dissipative effects into the classsic variational framework dates back to Rayleigh in the end of the 19th century \cite{rayleigh1945theory}. Rayleigh introduced, in addition to the Lagrangain, a \emph{dissipation function}--a positive quadratic function in the velocities--to account for friction in the system; this allowed for the extension of Lagrange's equations of motion. In Onsager's groundbreaking work \cite{onsager1931reciprocal} on linear irreversible thermodynamics, one of the first variational  principles for irreversible thermodynamics was formulated: the principle of the least dissipation of energy. This principle applied to a heat conducting solid produces the steady states for the temperature distribution. In the 1950s M.A. Biot developed a variational formulation for the equations of classical thermoelasticity by means of a modified free energy (referred to as Biot’s potential) and a dissipation function. However, Biot's variational formulation was not given in terms of a single action such as in Hamilton's principle; rather a quasi-variational principle was formulated in-which the total variation of the dissipation function is not considered--only the  product of its derivatives with the appropriate infinitesimal variation \cite{biot1956thermoelasticity, biot1970variational}. 
\end{flushleft}

\begin{flushleft}
\justify 
Since then many have sought to uncover variaitonal formulations for dissipative continua. For instance in certain cases \cite{ottinger2005beyond, kaufman1984dissipative, morrison1986paradigm, said2019lagrangian} researchers were successful in extending the classic Lagrangian and/or Hamiltonian formulations to include effects of entropy production, while falling-short of constructing a unified action. Others \cite{gurtin1964variational, yang2006variational} were able to formulate action integrals for evolution-diffusion equations. However, these functionals do not have the simple form of a density function as does the Lagrangian, rather they are complicated expressions given in terms of convolutions of a one parameter integral, and it is not obvious how they would fit a Lagrangian-Hamiltonian framework. Dissipation has also been incorporated into various variational schemes via the Lagrange-d’Alembert principle (see for example \cite{kane2000variational, bloch1996euler, gay2017lagrangian}). For other possible extensions the reader is referred to \cite{PhysRevLett.110.174301, riewe1997mechanics}. Moreover, a Noether's theorem was advanced in \cite{kalpakides2004canonical} for theory of nonlinear thermoelasticity without dissipation. But despite of progress in this area, a unified extension of the variational formalism of analytical mechanics to general dissipative systems remains still out of reach.
\end{flushleft}

\begin{flushleft}
\justify 
In this paper we construct a new least action principle, albeit in one space dimension, analogous to Hamilton's principle by calculating the rate of change in the energy flux. This allows us to write the conservation of energy equation as a second order hyperbolic PDE for the total energy of the system in one space dimension. A myriad of consequence will then follow. The hyperbolic PDE can be rewritten as the Euler-Lagrange equations of a new action we denote $\Sigma$. This produces a natural way of revealing the symmetry that exists between the balance of energy and momentum, and as such the new least action principle follows without extraneous physics assumptions. Hamiltonian and bracket formalisms also follow from the Euler-Lagrange equations in a similar fashion to analytical mechanics. Furthermore, the symmetries leaving $\Sigma$ invariant correspond to new conservation laws. 
\end{flushleft}

\begin{flushleft}
\justify 
A noteworthy consequence of the above procedure is that it gives ground for producing a third functional, this time from the energy equation associated with $\Sigma$ (i.e. Noether's theorem under time invariance applied to $\Sigma$). In fact, we can carry this procedure indefinitely, that is we prove that our procedure is iterative giving rise to a hierarchy of variational principles each  constructed from the previous iteration. We, hence, obtain an infinite number of functionals (i.e. Lagrangians) and an (infinite) iterative scheme, and advance a complete analytic mechanics at each iteration.
\end{flushleft}

\begin{flushleft}
\justify 
The paper is organized as follows. We consider isentropic systems in Section 2--5, which will set the foundation for tackling the dissipative case and constructing the	variational hierarchy. We begin in Section 2 by considering the rate of change in the energy flux. We show that the total energy propagates according to a homogeneous wave equation, which can be derived as stationary points of a new functional $\Sigma$. This also gives rise to an integral of motion associated with $\Sigma$. In Section 3, we establish a unified Lagrangian-Hamiltonian formulation through the newly construed functional $\Sigma$ and its Legendre transform $\Pi$, respectively. Hamilton's equations for the total energy are shown to be equivalent to a bracket formulation over a new phase space that consists of the energy-power pair. We establish in Section 4 Noether's theorem: transformations that leave functional $\Sigma$ invariant result in conservation laws. As such we examine two groups of transformations: arbitrary translations in the total energy, and space-time translations. Invariance under the first corresponds to the balance of energy. The second is the basis for constructing a second order tensor comparable to the energy-momentum tensor in classical field theory. Subsequently we obtain a number of new conservation laws, one of which governs the evolution of density function $\pi$. One of our main results is contained in Section 5: the formulation of infinite hierarchy of variational principles (i.e. least action principles) and constants of motion. This result is made possible because the basic procedure used in constructing $\Sigma$ is iterative in nature; the Lagrangian at each iteration is formulated based on the field variables of the previous iteration. Therefore, at each iteration we produce a complete variational analysis for the relevant fields.
\end{flushleft}

\begin{flushleft}
\justify 
In Section 6 we look to apply our procedure to non-conservative systems, namely to dissipative thermoelastic material. The effects of entropy production due to heat flow manifest itself as a non-homogeneous term in the hyperbolic PDE for the total energy. Hence, a set of two coupled Euler-Lagrange equations determines the complete evolution of thermoelastic materials. Specifically, the functional $\Sigma$ for a linear thermoelastic body produces the classic energy equation in the theory of thermoelasticity, and as a special case the heat equation. Moreover, a modified Noether's theorem is given for the dissipative case. While we can not produce conservation laws--because of the inhomogeneity in the balance of energy--we obtain auxiliary equations which are fundamental for obtaining the next iteration in the variational hierarchy for the dissipative case. In Section 7, we compare our results with the well-established variational principle in the theory of irreversible thermodynamics. Finally, we offer concluding remarks in Section 8.
\end{flushleft}

\section{Least Action Principle for Balance of Energy}

\begin{flushleft}
\justify
In this section we show that the balance of energy equation in 1D corresponds to the stationary points of a functional $\Sigma$. We commence our construction by first considering the isentropic problem; while simple it will form the basis for extending $\Sigma$ to the dissipative case and constructing the infinite variational hierarchy.
\end{flushleft}

\begin{flushleft}
\justify
The point of departure for us is to consider the first law of thermodynamics (i.e. the balance of energy) in material coordinates
\begin{equation*} 
\rho_0 \dfrac{\partial I}{\partial t} = \dfrac{\partial }{\partial x_j} \left( v_iS_{ij} - q_j \right) \qquad \text{in} \; \mathcal{B} \times [0,\tau]
\end{equation*}
Here, the $n$-dimensional domain $\mathcal{B}$ represents the reference configuration ($i,j = 1, 2, ... n$), and time $\tau > 0$. The quantity $\rho_0$ is the material density of $\mathcal{B}$; $I = e + \frac{v_i v_i}{2}$ is the total energy: the sum of the internal energy and the kinetic energy; $S_{ij}$ is the first Piola-Kirchhoff stress tensor; and $q_j$ represents the heat flux across the boundary of $\mathcal{B}$. For simplicity, we have considered the balance of energy in the absence of body forces and heat sources (see Appendix A). 

\end{flushleft}

\begin{flushleft}
\justify

However, since the construction of $\Sigma$ holds in one space variable, that is $n=1, x \doteq x_1$, the above equation simplifies to
\begin{equation}  \label{2.1}
\rho_0 \dfrac{\partial I}{\partial t} = \dfrac{\partial }{\partial x} \left( v S - q \right) \qquad \text{in} \; \mathcal{B} \times [0,\tau]
\end{equation}
where now $\mathcal{B} = (a_1, a_2)$ is the reference configuration, and the quantities $v_i$, $S_{ij}$ and $q_j$ reduce to scalars $v$, $S$ and $q$, respectively.

\end{flushleft}

\begin{flushleft}
\justify
We further assume that our constitutive laws determine a classic thermoelastic medium
\begin{equation} \label{2.2}
e = e( \partial_x u,s), \quad  S = \rho_0 \dfrac{\partial e}{\partial (\partial_x u)}, \quad  \theta = \dfrac{\partial e}{\partial s}, \quad q = - k \, \dfrac{\partial \theta}{\partial x} 
\end{equation}
Here $s$ is the entropy density, and $\theta$ is the absolute temperature (i.e. $\theta = \theta_0 + T$ where $\theta_0$ is constant temperature in the undeformed configuration). Equation \eqref{2.1} holds for solids as well as fluids. Throughout this paper we assume from the outset that the heat flux $q$ is modeled by Fourier's law of heat conduction $q = - k \, \frac{\partial \theta}{\partial x} $ , where $k$ here is the thermal conductivity. In case of fluids the pressure is given as $S = -p$, and the internal energy $e$ (hence the pressure) is a state function of the density $\rho$ associated with the motion through the relation $\dfrac{\rho_0}{1+\partial_x u} =\rho$. We emphasize that the constitutive relations \eqref{2.2} define a (possibly nonlinear) thermoelastic medium. Viscoelastic material for example, where the stress depends of the strain rate, is not included.
\end{flushleft}

\begin{flushleft}
\justify
It is known that hyperbolic PDEs such as the wave equation posses a variational structure because they are associated with symmetric operators \cite{finlayson2013method}. The balance of energy equation on the other hand, supplemented by Fourier's law for heat conduction, usually results in a nonlinear parabolic PDE in the temperature field\footnote{A departure from the classic constitutive laws can result in a wave-like equation for the temperature field. See for example \cite{green1993thermoelasticity, straughan2011heat}.}. In fact, even upon linearization, this equation does not posess an obvious variational form since it will ultimately correspond to a non-symmetric operator \cite{yang2006variational}. We circumvent this impediment by showing that the total energy $I$ satisfies a wave equation.
\end{flushleft}

\begin{flushleft}
\justify
We begin by considering the isentropic problem: $q=0, e=e(\partial_x u), S=S(\partial_x u), \partial_t s =0$. The rate of change of the energy flux $F = vS$ reads
$$ \dfrac{\partial F}{\partial t} = S\dfrac{\partial v}{\partial t} + v \dfrac{\partial S}{\partial t}$$
We substitute for the conservation of momentum equation (in the absence of body forces) to obtain
\begin{align*} 
\dfrac{ \partial  F}{\partial t} =& S \dfrac{1}{\rho_0} \dfrac{ \partial S}{\partial x} + v \dfrac{\partial S}{\partial t} \\
=& \dfrac{S}{\rho_0} \dfrac{\partial S}{\partial (\partial_x u)} \dfrac{\partial (\partial_x u)}{\partial x }  + v \dfrac{\partial S }{\partial (\partial_x u)} \dfrac{\partial (\partial_x u)}{\partial t}  \\
=&  \dfrac{\partial S }{\partial (\partial_x u)} \left( \dfrac{\partial e}{\partial (\partial_x u)} \dfrac{\partial(\partial_x u)}{\partial x} + v \dfrac{\partial v}{\partial x} \right) \\
=& \dfrac{\partial S }{\partial (\partial_x u)} \dfrac{\partial }{\partial x} \left( e + \dfrac{v^2}{2} \right) \\
=& c \dfrac{\partial I}{\partial x} \numberthis \label{2.3}
\end{align*}
where 
$$c \doteq \dfrac{\partial S }{\partial (\partial_x u)}$$
is the nonlinear elastic modulus.
\end{flushleft}

\begin{flushleft}
\justify
Now Equation \eqref{2.3} together with \eqref{2.1} produces
\begin{equation}  \label{2.4}
\rho_0 \dfrac{\partial^2I}{\partial t^2} - \dfrac{\partial }{\partial x} \left( c\dfrac{\partial  I}{\partial x} \right) =0
\end{equation}
Therefore for constant $c$
\begin{equation}  \label{2.5}
\dfrac{\partial^2I}{\partial t^2} - \dfrac{c}{\rho_0}\dfrac{\partial ^2 I}{\partial x^2} =0
\end{equation}
We have obtained the classic wave equation for the total energy $I$, which we choose to replace the classic balance of energy equation \eqref{2.1} when $q=0$. The speed of propagation is well-defined and finite (for constant $c$) and is equal to the speed of propagation of an elastic wave $\sqrt{c/ \rho_0}$. That the total energy is governed by a wave equation should not come as a surprise. The propagating elastic wave packet has an associated energy, which propagates with the wave into the material. So as long as $ u \neq 0$ the associated total energy propagates with the wave in the same direction.
\end{flushleft}

\begin{flushleft}
\justify
Two immediate consequences present themselves. The first is that both equations \eqref{2.4} and \eqref{2.5}--that is, whether $c$ is constant or not--can be derived as the stationary points of an action functional, which we denote $\Sigma$. The functional $\Sigma$ is defined analogously to the classical Lagrangian in terms of $I$
\begin{equation} \label{2.6}
\Sigma(I) = \int_{\mathcal{B}} \sigma (\partial_t I, \partial_x I) \, dx \doteq \int_{\mathcal{B}} \dfrac{1}{2} \left( \rho_0 \left( \dfrac{\partial I}{\partial t} \right)^2 - c \left( \dfrac{\partial I}{\partial x} \right)^2 \right) \, dx
\end{equation}
Therefore have obtained a least action principle analogous to Hamilton's principle
\begin{theorem}[A least action principle] \label{th1}
The actual evolution of the total energy $I$ in $[0,\tau]$ coincides with the stationary points of the functional
\begin{equation*} \label{2.7}
\int_{0}^{\tau} \Sigma \, dt 
\end{equation*} 
\end{theorem}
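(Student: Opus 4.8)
The plan is to show that the Euler--Lagrange equations of the action $\int_0^\tau \Sigma\,dt$ are exactly equation \eqref{2.4}, and then to invoke the chain of identities \eqref{2.3}--\eqref{2.4} already established so as to identify those equations with the (isentropic) balance of energy. Throughout, the elastic modulus $c = \partial S/\partial(\partial_x u)$ is regarded as a prescribed coefficient field determined by the underlying motion; it is not itself varied when $I$ is varied.

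First I would compute the first variation. Writing $\mathcal{A}[I] = \int_0^\tau\!\!\int_{\mathcal B} \sigma(\partial_t I,\partial_x I)\,dx\,dt$ with $\sigma = \frac12\big(\rho_0(\partial_t I)^2 - c(\partial_x I)^2\big)$, take a one-parameter family $I+\epsilon\eta$ with $\eta$ vanishing at $t=0$ and $t=\tau$ (so the temporal endpoint terms drop) and with $\eta$, or the conjugate flux, vanishing on $\partial\mathcal B$ (the essential/natural boundary conditions associated with $\Sigma$). Then
\[
\frac{d}{d\epsilon}\Big|_{\epsilon=0}\mathcal{A}[I+\epsilon\eta] = \int_0^\tau\!\!\int_{\mathcal B}\big(\rho_0\,\partial_t I\,\partial_t\eta - c\,\partial_x I\,\partial_x\eta\big)\,dx\,dt,
\]
and integrating by parts once in $t$ and once in $x$ gives
\[
\frac{d}{d\epsilon}\Big|_{\epsilon=0}\mathcal{A}[I+\epsilon\eta] = -\int_0^\tau\!\!\int_{\mathcal B}\Big(\rho_0\,\partial_t^2 I - \partial_x\big(c\,\partial_x I\big)\Big)\eta\,dx\,dt + \int_0^\tau\big[\,c\,\partial_x I\,\eta\,\big]_{\partial\mathcal B}\,dt .
\]
The admissible class of $\eta$ kills the surface term, so by the fundamental lemma of the calculus of variations, stationarity for all such $\eta$ is equivalent to $\rho_0\,\partial_t^2 I - \partial_x(c\,\partial_x I) = 0$, i.e. precisely \eqref{2.4} (and \eqref{2.5} when $c$ is constant).

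It then remains to identify solutions of \eqref{2.4} with the actual thermomechanical evolution. The forward direction is immediate from the derivation \eqref{2.3}--\eqref{2.4}: whenever $(u,v,S,e,s)$ solves the isentropic balance laws together with the constitutive relations \eqref{2.2}, the total energy $I=e+v^2/2$ satisfies \eqref{2.4}, hence renders $\int_0^\tau\Sigma\,dt$ stationary. For the converse I would integrate \eqref{2.4} once in time: since $\partial_t F = c\,\partial_x I$ holds identically along any motion by \eqref{2.3}, equation \eqref{2.4} is the once-differentiated form of \eqref{2.1} with $q=0$; thus a stationary $I$ supplemented by the initial condition $\rho_0\,\partial_t I|_{t=0} = \partial_x(vS)|_{t=0}$ reproduces \eqref{2.1} on $[0,\tau]$. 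This is the precise sense in which the stationary points ``coincide with'' the actual evolution.

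The point demanding the most care — the effective main obstacle — is exactly this last equivalence together with the bookkeeping of boundary terms: one must state explicitly that $c$ is held fixed under the variation (otherwise the Euler--Lagrange computation acquires extra terms), that the variational class encodes the physically correct conditions on $\partial\mathcal B$ so the surface integral $\int_0^\tau[\,c\,\partial_x I\,\eta\,]_{\partial\mathcal B}\,dt$ vanishes, and that the first-order equation \eqref{2.1} and the second-order equation \eqref{2.4} agree only modulo the choice of initial flux data. The variational calculus itself, displayed above, is routine.
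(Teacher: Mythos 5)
Your proposal is correct and follows essentially the same route as the paper: the paper's own justification of Theorem~\ref{th1} is precisely the derivation of the wave equation \eqref{2.4} from \eqref{2.1} (with $q=0$) and \eqref{2.3}, combined with the observation that \eqref{2.4} is the Euler--Lagrange equation of the functional \eqref{2.6} under variations of compact support in space and time (cf.\ Remark~1(iv)). Your additional care about holding $c$ fixed under the variation, the vanishing of the boundary term, and the fact that the second-order equation \eqref{2.4} recovers the first-order balance \eqref{2.1} only modulo initial flux data is a welcome sharpening of points the paper leaves implicit, but it is a refinement of, not a departure from, the paper's argument.
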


\end{flushleft}

\begin{remark}
\justify
\leavevmode
\begin{enumerate} [i]
\item The symmetric operator associated with \eqref{2.4} or \eqref{2.5} (in weak form) can be constructed as follows. Let $V$ be some appropriate Hilbert space such that $I \in V$ solves \eqref{2.4} or \eqref{2.5} (e.g. $V= L^2 \left( 0, \tau; H^1_0(\mathcal{B}) \right)$), and define the operator $A: V \times V \longrightarrow \mathbb{R}$

$$ A(I_1,I_2) = \int_0^{\tau}  \int_{\mathcal{B}} \rho_0 \dfrac{\partial I_1}{\partial t} \dfrac{\partial I_2}{\partial t} \, dx dt - \int_0^{\tau}  \int_{\mathcal{B}} c \dfrac{\partial I_1}{\partial x} \dfrac{\partial I_2}{\partial x} \, dx dt $$

with $\partial_t I_1 (\cdot,0) =\partial_t I_2(\cdot,\tau) = 0$ for all $I_1, I_2 \in V$. Clearly, operator $A$ is symmetric. 

\item While equation \eqref{2.4} (or \eqref{2.5}) is to be expected, its significance relies in showing that the total energy I is the natural field to consider if one is looking to formulate a least action principle. In fact, equation \eqref{2.4} will form the essential ingredient for extending the novel least action principle to the dissipative case.

\item The calculation leading to \eqref{2.3} demonstrates the difficulty of arriving at \eqref{2.4} in 3D (or even 2D): the use of the chain rule to calculate the rate of change of the energy flux $\partial_t (v_i S_{ij})$ in higher dimensions will result in extra terms in \eqref{2.3} not involving the total energy $I$.

\item From a calculus of variation perspective, the variation in the total energy in \eqref{2.6} were assumed to be compact in space and time. Compactness in time only result in an additional term in \eqref{2.6} which express the force exerted at the end points \cite{marsden1994mathematical}. 
\end{enumerate}

\end{remark}
\begin{flushleft}
\justify
Another consequence of \eqref{2.5} is that we can obtain an analogous result to conservation of energy in wave mechanics
\begin{corollary}[Integral of motion] \label{cor1}
If $I$ satisfies \eqref{2.5} and has compact support on $\mathcal{B}$. Then
$$\Pi (t) =  \int_{\mathcal{B}} \dfrac{1}{2} \left( \rho_0 \left( \dfrac{\partial I}{\partial t} \right)^2 + c \left( \dfrac{\partial I}{\partial x} \right)^2 \right) \, dx$$
is an integral of motion, that is $\Pi(t) = \Pi(0)$.
\end{corollary}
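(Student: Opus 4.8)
The plan is to run the classical energy-conservation argument for the wave equation: differentiate $\Pi$ in time, integrate by parts in $x$, and observe that the integrand collapses to a multiple of the left-hand side of \eqref{2.5}. First I would note that, for a classical solution $I$ of \eqref{2.5} on $\overline{\mathcal{B}}\times[0,\tau]$ (or, in the weak setting, for $I$ in the energy space of Remark~(i)), one may differentiate under the integral sign to get
$$\frac{d\Pi}{dt} = \int_{\mathcal{B}} \left( \rho_0\, \frac{\partial I}{\partial t}\,\frac{\partial^2 I}{\partial t^2} + c\,\frac{\partial I}{\partial x}\,\frac{\partial^2 I}{\partial t\,\partial x} \right) dx .$$
Since $c$ (and $\rho_0$) are constant under the hypothesis that $I$ solves \eqref{2.5}, integration by parts in $x$ on the second term yields $\int_{\mathcal{B}} c\,\partial_x I\,\partial_x\partial_t I\,dx = \big[\,c\,\partial_x I\,\partial_t I\,\big]_{\partial\mathcal{B}} - \int_{\mathcal{B}} c\,\partial_{xx} I\,\partial_t I\,dx$, and the compact-support assumption on $I$ (hence $I$ and all its derivatives vanish near $a_1,a_2$) kills the boundary contribution.

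Substituting back gives
$$\frac{d\Pi}{dt} = \int_{\mathcal{B}} \frac{\partial I}{\partial t}\left( \rho_0\,\frac{\partial^2 I}{\partial t^2} - c\,\frac{\partial^2 I}{\partial x^2} \right) dx = 0$$
by \eqref{2.5}, for every $t\in[0,\tau]$. Integrating this identity from $0$ to $t$ then yields $\Pi(t)=\Pi(0)$, which is the claim.

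The only genuinely delicate point is the regularity needed to move $d/dt$ inside the integral and to integrate by parts: for classical solutions this is automatic, and in the weak formulation one simply replaces the pointwise computation by testing the weak form of \eqref{2.4} against $\partial_t I$ in the bilinear form $A$ of Remark~(i), so no new idea is required. I would also remark that this corollary is precisely the special case of the time-translation invariance of $\int_0^{\tau}\Sigma\,dt$ (Noether's theorem, Section~4), with $\Pi$ recognized as the Legendre transform of the density $\sigma$; but I would present the direct computation above since it is short and self-contained.
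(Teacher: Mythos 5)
Your proof is correct and is exactly the standard energy argument the paper relies on: the corollary is stated without an explicit proof, being the classical conservation-of-energy identity for the wave equation, and your computation (differentiate $\Pi$, integrate by parts, kill the boundary term by compact support, invoke \eqref{2.5}) supplies it faithfully. Your closing remark also matches how the paper itself later recovers the result, namely as $\partial_t \Pi = \lbrace \Pi, \Pi \rbrace = 0$ in the Hamiltonian/Poisson-bracket formulation of Section 3.
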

\end{flushleft}

\begin{remark}
\justify
\leavevmode

As we have seen, Theorem 1 is valid for non-constant $c$. In fact, the complete analytical mechanics formulation can be extended to the case of non-constant $c$ as well as to a more general formulation of the first law (see Appendix A for details).
\end{remark}

\section{Lagrangian-Hamiltonian Formulation}
\begin{flushleft}
\justify
In this section we formulate the Lagrangian-Hamiltonian theory corresponding to the balance of energy, now rewritten in hyperbolic form  \eqref{2.4} or \eqref{2.5}.
\end{flushleft}

\begin{flushleft}
\justify
By Theorem \ref{th1}, the Euler-Lagrange equations for the total energy $I$ directly follow in terms of the functional $\Sigma$ 
\begin{equation} \label{3.1}
\dfrac{d }{d t} \left( \dfrac{\delta \Sigma}{\delta (\partial_t I)} \right) - \dfrac{\delta \Sigma}{\delta I } =0
\end{equation}
where $\dfrac{\delta }{\delta u}$ denotes the functional derivative. Clearly, the Euler-Lagrange equation is equivalent to \eqref{2.4}  (or to \eqref{2.5} for constant $c$).
\end{flushleft}

\begin{flushleft}
\justify 
The Hamiltonian formulation corresponding to the new least action principle follows analogously to classical mechanics. We first introduce the variable $J$ defined analogously to the momentum variable in mechanics
\begin{equation} \label{3.2}
J = \dfrac{\delta \Sigma}{\delta (\partial_t I)}
\end{equation}
\end{flushleft}

\begin{flushleft}
\justify 
By taking the Legendre transformation of the function $\sigma$, the density function of $\Sigma$, with respect to the change of variable $(I, \partial_t I) \longrightarrow (I, J)$, we obtain an analogous quantity to the Hamiltonian density
\begin{equation} \label{3.3}
\pi(\nabla I,J) \doteq J \cdot\dfrac{\partial I}{\partial t} - \sigma
\end{equation}
A simple calculation gives the total quantity of $\pi$ inside $\mathcal{B}$
\begin{equation*} \label{3.4}
\Pi \doteq \int_{\mathcal{B}} \pi d \mathcal{B} = \int_{\mathcal{B}} \left( \dfrac{1}{2\rho_0} J^2 +\dfrac{c}{2} \left( \dfrac{\partial I}{\partial x} \right)^2 \right) dx
\end{equation*}
\end{flushleft}

\begin{flushleft}
\justify 
The physical interpretation of $J$ becomes evident once we carry-out the calculations in \eqref{3.1}
$$ J = \dfrac{\delta \Sigma}{\delta (\partial_t I)} = \rho_0 \dfrac{\partial I}{\partial t} $$
The quantity $J$ is the total power density of the system. Therefore, the phase space associated with $\Pi$ specifies the energy-power pair of the system at each instant time. 
\end{flushleft}

\begin{flushleft}
\justify 
Equipped with he Euler-Lagrange equation \eqref{3.1} together with  functional $\Pi$, we can rewrite the nonlinear evolution equation for the total energy $I$ as Hamilton's equations in $\Pi$ 

\begin{align*}
\dfrac{\partial I}{\partial t} = \dfrac{\delta \Pi}{\delta J}  \\
\dfrac{\partial J}{\partial t}   = - \dfrac{\delta \Pi}{\delta I} 
\end{align*}
\end{flushleft}

\begin{flushleft}
\justify 
For arbitrary functionals $\Phi_1$ and $\Phi_2$ over the (infinite dimensional) phase space associated with the functional $\Pi$: $\lbrace (I,J) \mid I  = e +\frac{v^2}{2} \ , J = \frac{\delta \Sigma}{\delta (\partial_t I)} \rbrace$, we can introduce the canonical Poisson bracket $ \lbrace , \rbrace$
\begin{equation} \label{3.6}
\lbrace \Phi_1, \Phi_2 \rbrace  = \int_{\mathcal{B}} \dfrac{\delta \Phi_1}{\delta I} \dfrac{\delta \Phi_2}{\delta J} - \dfrac{\delta \Phi_2}{\delta I}  \dfrac{\delta \Phi_1}{\delta J} \, d\mathcal{B}
\end{equation}
The evolution of an arbitrary functional $\Phi$ over the phase space is governed by $\Pi$ through the Poisson structure
$$ \dot{\Phi}(I,J) = \int_{\mathcal{B}} \left( \dfrac{\delta \Phi}{\delta I}, \dfrac{\delta \Phi}{\delta J} \right) \cdot  \left( \dfrac{\partial I}{\partial t}, \dfrac{\partial J}{\partial t} \right)^T d\mathcal{B}  = \int_{\mathcal{B}} \dfrac{\delta \Phi}{\delta I} \dfrac{\delta \Pi }{\delta J} -  \dfrac{\delta \Phi}{\delta J} \dfrac{\delta \Pi }{\delta I} d\mathcal{B} = \lbrace \Phi, \Pi \rbrace $$
Specifically, we can rewrite Hamilton's equations \eqref{3.3} in Poisson form 

\begin{align*}
\dfrac{\partial I}{\partial t} = \lbrace I, \Pi \rbrace   \\
\dfrac{\partial J}{\partial t}  = \lbrace J, \Pi \rbrace 
\end{align*}

Finally, if we assume that $\Pi$ is strictly a function over the phase space, that is $c$ is constant, we reproduce the result of Corollary \ref{cor1}
$$ \dfrac{\partial \Pi}{\partial t}  = \lbrace \Pi, \Pi \rbrace = 0 $$
\end{flushleft}

\section{Variational Symmetries and Conservation Laws}

\begin{flushleft}
\justify
In this section we demonstrate that invariance of the quantity $\Sigma$ under group transformations (up to a full divergence of a field) result in conservation laws, one of which is the balance of energy. We shall state a version of Noether's theorem $\Sigma$ convenient for our setting, and examine the consequences; namely by constructing a quantity analogous to the energy-momentum tensor in field theory.
\end{flushleft}

\begin{flushleft}
\justify
Set $z_{\mu} = \lbrace t,x \rbrace \in \mathcal{B}_\tau \doteq [0,\tau] \times \mathcal{B}$ for arbitrary $\tau \in \mathbb{R}^+$, and $\mu =  0,1$. This notation (often used in field theory) proves more convenient in stating Noether's theorem. Under this notation, the Euler-Lagrange equation \eqref{3.1} reads
\begin{equation} \label{4.1}
\dfrac{\partial }{\partial z_{\mu} } \left(  \dfrac{\partial \sigma }{\partial (\partial_{\mu} I)} \right) = 0
\end{equation}
We further assume in this section that the elastic modulus $c$ is constant (see Remark 2).
\end{flushleft}

\subsection*{Noether's Theorem }

\begin{flushleft}
\justify 
We introduce a one-parameter smooth transformation $ \lambda \in [0, \infty) \longmapsto I_{\lambda} = I(z_{\mu}; \lambda)$ with $I_\lambda \mid_{\lambda=0} = I$. Similarly, we define $\sigma_\lambda \doteq \sigma (\partial_{\mu} I_\lambda)$. 
\end{flushleft}

\begin{flushleft}
\justify 
We say that function $\sigma$ is invariant under the one-parameter group of transformations $\lambda \longmapsto I_\lambda$ if 
\begin{equation} \label{4.2}
\dfrac{d}{d\lambda} \int_{\mathcal{B}_T} \sigma_\lambda \mid_{\lambda=0} d^2 z= \int_{\mathcal{B}_T} \dfrac{\partial K_{\mu}}{\partial z_{\mu}} d^2 z
\end{equation}
for some (possibly zero) four-vector field $\vec{K} = \vec{K}(I,\partial_{\mu} I)$. 
\end{flushleft}

\begin{theorem}[Noether's theorem for $\sigma$] \label{th2}
Assume $\sigma$ is invariant under the one-parameter group of transformations $\lambda \longmapsto I_\lambda$, then the Euler-Lagrange system corresponding to $\sigma$ admits the conservation law
\begin{equation} \label{4.3}
\dfrac{\partial P_{\mu}}{\partial z_{\mu}} = 0
\end{equation}
where the conserved current $P_{\mu}$ is defined as
\begin{equation} \label{4.4}
P_{\mu} = \dfrac{\partial \sigma}{\partial (\partial_{\mu} I)} \cdot \dfrac{\partial I_{\lambda}}{\partial \lambda} \mid_{\lambda=0} - K_{\mu} 
\end{equation}

\end{theorem}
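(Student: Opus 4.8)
The plan is to follow the classical derivation of Noether's theorem for a first-order field Lagrangian, but adapted to the specific structure here where $\sigma = \sigma(\partial_\mu I)$ depends only on the derivatives of $I$ and not on $I$ itself (so the Euler--Lagrange equation is simply \eqref{4.1}). First I would differentiate the invariance identity \eqref{4.2} by bringing $d/d\lambda$ inside the integral and applying the chain rule, which gives
\begin{equation*}
\int_{\mathcal{B}_T} \dfrac{\partial \sigma}{\partial(\partial_\mu I_\lambda)} \cdot \dfrac{\partial}{\partial z_\mu}\left(\dfrac{\partial I_\lambda}{\partial \lambda}\right) d^2 z \;=\; \int_{\mathcal{B}_T} \dfrac{\partial K_\mu}{\partial z_\mu}\, d^2 z,
\end{equation*}
where I have used that $\partial_\mu$ and $\partial_\lambda$ commute. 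Evaluating at $\lambda = 0$ and writing $\delta I \doteq \partial_\lambda I_\lambda\mid_{\lambda=0}$, the left-hand integrand becomes $\frac{\partial \sigma}{\partial(\partial_\mu I)}\,\partial_\mu(\delta I)$.

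Next I would rewrite this integrand as a total divergence minus a remainder, using the product rule in the form
\begin{equation*}
\dfrac{\partial \sigma}{\partial(\partial_\mu I)}\,\dfrac{\partial (\delta I)}{\partial z_\mu} \;=\; \dfrac{\partial}{\partial z_\mu}\!\left(\dfrac{\partial \sigma}{\partial(\partial_\mu I)}\,\delta I\right) \;-\; \left[\dfrac{\partial}{\partial z_\mu}\!\left(\dfrac{\partial \sigma}{\partial(\partial_\mu I)}\right)\right]\delta I.
\end{equation*}
The bracketed term is precisely the left-hand side of the Euler--Lagrange equation \eqref{4.1}, which vanishes on solutions $I$. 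Substituting back and moving the $K_\mu$ term to the left, the invariance identity reduces to $\int_{\mathcal{B}_T} \partial_\mu\!\left(\frac{\partial\sigma}{\partial(\partial_\mu I)}\delta I - K_\mu\right) d^2 z = 0$, i.e. $\int_{\mathcal{B}_T} \partial_\mu P_\mu\, d^2 z = 0$ with $P_\mu$ as in \eqref{4.4}. Since $\tau$ (hence the domain $\mathcal{B}_T$) is arbitrary and the integrand $\partial_\mu P_\mu$ is continuous, a standard localization argument (shrinking the region of integration to an arbitrary point) forces $\partial_\mu P_\mu = 0$ pointwise, which is \eqref{4.3}.

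I do not expect a serious obstacle here; the argument is essentially the textbook one. The one point requiring a little care is the justification of passing $d/d\lambda$ under the integral sign and commuting it with $\partial_\mu$, which needs the smoothness of the transformation $\lambda \mapsto I_\lambda$ that was assumed in the setup, together with enough regularity of $I$ for the integrand in \eqref{4.1} to be classically defined. A second mild point is the final localization step: strictly it requires that the identity $\int_{\mathcal{B}_T}\partial_\mu P_\mu\, d^2 z = 0$ holds not just for one $\tau$ but for a family of subdomains exhausting neighborhoods of each point; this is available because $\tau$ is arbitrary and $\mathcal{B}$ may be replaced by any subinterval, so the conclusion is genuinely pointwise. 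If one instead works in the weak (Hilbert-space) setting alluded to in Remark 1(i), the same computation yields the conservation law in distributional form and the localization step is unnecessary.
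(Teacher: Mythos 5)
Your proposal is correct and follows essentially the same route as the paper: differentiate the invariance identity \eqref{4.2} under the integral, apply the product rule to peel off a total divergence, kill the remaining term with the Euler--Lagrange equation \eqref{4.1}, and identify the result with $\partial_\mu K_\mu$ to read off $P_\mu$. The only difference is that you make explicit the final localization step from the integral identity to the pointwise law \eqref{4.3}, which the paper passes over silently; that is a welcome clarification, not a divergence in method.
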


\begin{proof}

\begin{flushleft}
\justify

We begin by computing the LHS of \eqref{4.2}

\begin{align*} 
\dfrac{d}{d\lambda} \int_{\mathcal{B}_T} \sigma_\lambda \mid_{\lambda=0} d^2 z =& \int_{\mathcal{B}_T} \dfrac{\partial \sigma }{\partial (\partial_{\mu} I)} \cdot \dfrac{\partial (\partial_{\mu} I_\lambda)}{\partial \lambda} \mid_{\lambda=0} d^2 z \\ 
=& - \int_{\mathcal{B}_T} \dfrac{\partial }{\partial z_{\mu} } \left(  \dfrac{\partial \sigma }{\partial (\partial_{\mu} I)} \right) \cdot \dfrac{\partial I_\lambda}{\partial \lambda } \mid_{\lambda=0} d^2 z + \int_{\mathcal{B}_T}  \dfrac{\partial  } {\partial z_{\mu}} \left(  \dfrac{\partial \sigma }{\partial (\partial_{\mu} I)} \cdot \dfrac{\partial I_\lambda}{\partial \lambda } \mid_{\lambda=0}  \right) d^2 z \\
=&  \int_{\mathcal{B}_T}  \dfrac{\partial  } {\partial z_{\mu}} \left(  \dfrac{\partial \sigma }{\partial (\partial_{\mu} I)} \cdot \dfrac{\partial I_\lambda}{\partial \lambda } \mid_{\lambda=0}  \right) d^2 z  \numberthis \label{4.5} \\
\end{align*}	
where we have used the fact the $I$ satisfies the Euler-Lagrange equations \eqref{4.1}. By comparing equations \eqref{4.2} and \eqref{4.5}, we obtain the conservation law for the current $P_{\mu}$. 
\end{flushleft}

\end{proof}

\begin{remark}
\leavevmode
\justify
\begin{enumerate} [i]
\item In terms of coordinates $(t,x)$, the conservation law \eqref{4.3} can be rewritten as
\begin{equation} \label{4.6}
\dfrac{\partial P_0}{\partial t} + \dfrac{\partial P_1}{\partial x} = 0
\end{equation}

\item For simplicity we have chosen $\lambda$ to be a scalar parameter. Theroem 1 is equally valid for a $\mu-$dimensional parameter $\vec{\lambda}$ \cite{ kalpakides2004canonical}.

\item A more basic diffeomorphism can be defined with respect to the independent variable, that is, $\lambda \longrightarrow z_\mu^\lambda \doteq z_\mu(\lambda)$. Then the group of transformations $I_\lambda$ is defined in terms of $z_\mu^\lambda$, namely, $I_\lambda \doteq I(z_\mu^\lambda)$. Therefore, invariance of $\sigma$ with respect to the diffeomorphism $\lambda \longrightarrow z_\mu(\lambda)$ still produces the result of Theorem 1 \cite{giaquinta2011mathematical}.
\end{enumerate}
\end{remark}

\begin{flushleft}
\justify
An immediate implication of Theorem \ref{th2} is that the conservation of energy, written in hyperbolic form \eqref{2.4}, can be derived as a conservation law corresponding to invariance under \emph{energy translations}. 
\end{flushleft}

\begin{flushleft}
\justify
To see this, we define the family of energy translations $I_\lambda = I +\lambda \overline{I}  $ for some constant energy scalar $\overline{I}$. Then, under this group of transformations, we have
$$ \sigma_\lambda = \sigma ( \partial_\mu I_\lambda) = \sigma ( \partial_\mu I) = \sigma$$
Therefore, $\sigma$ is invariant under the transformation $\lambda \longrightarrow I_\lambda$; in other words, equation \eqref{4.2} is satisfied with $\vec{K} = 0$. The conserved current $P_\mu$, in this case, reads
\begin{equation*} \label{4.7}
P_\mu = \dfrac{\partial \sigma}{\partial (\partial _\mu I)} \cdot \overline{I}
\end{equation*}
and the corresponding conservation law holds
\begin{equation*} \label{4.8}
\dfrac{\partial }{\partial z_{\mu} } \left(  \dfrac{\partial \sigma }{\partial (\partial_{\mu} I)} \right) = 0
\end{equation*}
which is nothing other than the balance of energy equation written in Euler-Lagrange form \eqref{4.1}. 
\end{flushleft}

\begin{flushleft}
\justify
Another consequence of Theorem \ref{th2} is the global conservation law for \emph{total charge} $Q $. Integrating both sides of \eqref{4.6} over the some large region (interval) $\Omega $ yields:
$$ \int_{\Omega} \dfrac{\partial P_0}{\partial t} d\Omega= - \int_{\Omega} \dfrac{\partial P_1}{\partial x} d\Omega = - P_1 \mid_{\partial \Omega}$$
By assuming $P_1$ has compact support on $\Omega$, we conclude that the total charge $Q \doteq \int_{\Omega} P_0 d\Omega$ is conserved
\begin{equation*} \label{4.9}
\dfrac{d}{dt}Q = 0
\end{equation*}
\end{flushleft}

\subsection*{Energy-Momentum Tensor}

\begin{flushleft}
\justify
Among the conservation laws associated with a field Lagrangian, those that are derived from the \emph{energy-momentum tensor} are the most  significant from a physics perspective. Here again we can construct a quantity analogous to that of classical field theory, namely the energy-momentum tensor, and obtain a host of conservation laws as a result of Theorem \ref{th2}. 
\end{flushleft}

\begin{flushleft}
\justify
We begin the construction by considering the following space-time translation (see Remark 3-ii)
\begin{equation} \label{4.10}
z^\lambda_\mu \doteq z_\mu - \lambda_\mu =  z_\mu - \lambda_\eta \delta_{\mu \eta}
\end{equation}
where $\mu, \eta = 0,1$.
\end{flushleft}

\begin{flushleft}
\justify
Under this perturbation the field $I$ can be written as\footnote{We shall suppress the index $\eta, \mu$ on $\lambda$ if the parameter $\lambda$ appears as a subscript.}
\begin{equation*} \label{4.11}
I_\lambda = I(z_\mu - \lambda_\mu) = I(z_\mu) + \lambda_\mu \dfrac{\partial I_{\lambda}}{\partial z_\mu} \mid_{\lambda = 0} + o(| \lambda_\mu|) 
\end{equation*}
\end{flushleft}

\begin{flushleft}
\justify
Next, we show $\sigma_{\lambda}$ satisfies \eqref{4.2}
\begin{proposition} \label{prop1}
Function $\sigma$ is invariant under the space-time translation \eqref{4.10}.
\end{proposition}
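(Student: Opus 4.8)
The plan is to exploit the single structural fact that makes translation invariance work: the density $\sigma$ in \eqref{2.6} depends on the independent variables $z_\mu=(t,x)$ only \emph{implicitly}, through the first derivatives $\partial_\mu I$, and carries no explicit $z_\mu$-dependence (recall $c$ is constant throughout this section). This is precisely the archetypal hypothesis under which \eqref{4.2} holds with a nonzero $\vec K$, so the work is to identify $\vec K$ and verify the identity.

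First I would make the transformation law explicit. Following Remark 3-iii, under \eqref{4.10} the field is $I_\lambda(z_\mu)=I(z_\mu^\lambda)=I(z_\mu-\lambda_\eta\delta_{\mu\eta})$, so differentiating in $z$ gives $\partial_\mu I_\lambda(z)=(\partial_\mu I)(z-\lambda)$; the gradient is simply evaluated at the shifted point, and hence $\sigma_\lambda(z)=\sigma\bigl((\partial_\mu I)(z-\lambda)\bigr)$. Next I would differentiate with respect to the (vector) parameter. Holding $z$ fixed and using the chain rule,
\[
\frac{d}{d\lambda_\eta}\,\sigma_\lambda\Big|_{\lambda=0}
=\frac{\partial\sigma}{\partial(\partial_\mu I)}\cdot\frac{\partial}{\partial\lambda_\eta}\bigl[(\partial_\mu I)(z-\lambda)\bigr]\Big|_{\lambda=0}
=-\,\frac{\partial\sigma}{\partial(\partial_\mu I)}\,\partial_\eta\partial_\mu I .
\]
On the other hand, since $\sigma$ sees $z$ only through $\partial_\mu I$, the chain rule also yields $\partial_\eta\sigma=\frac{\partial\sigma}{\partial(\partial_\mu I)}\partial_\eta\partial_\mu I$. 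Comparing the two expressions gives $\frac{d}{d\lambda_\eta}\sigma_\lambda|_{\lambda=0}=-\partial_\eta\sigma=-\partial_\mu(\delta_{\mu\eta}\sigma)$, which is a pure divergence. Integrating over $\mathcal{B}_T$, the invariance condition \eqref{4.2} is satisfied with the field $K_\mu=-\delta_{\mu\eta}\sigma$ — one such field for each component $\eta$ of the parameter, in accordance with Remark 3-ii on vector-valued $\lambda$.

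The computation is routine; the only delicate points are the sign induced by the minus in \eqref{4.10}, the index bookkeeping (the summed index $\mu$ versus the free index $\eta$ labelling the component of $\lambda$), and — the conceptual crux — the recognition that the absence of explicit $z_\mu$-dependence in $\sigma$ is exactly what collapses $\frac{d}{d\lambda_\eta}\sigma_\lambda|_{\lambda=0}$ into a total divergence rather than leaving a leftover $\partial\sigma/\partial z_\eta$ term. There is no substantive obstacle beyond this. With Proposition \ref{prop1} established, feeding $K_\mu=-\delta_{\mu\eta}\sigma$ and $\partial I_\lambda/\partial\lambda_\eta|_{\lambda=0}=-\partial_\eta I$ into \eqref{4.4} produces the two-index array $P_\mu^{(\eta)}=\delta_{\mu\eta}\sigma-\frac{\partial\sigma}{\partial(\partial_\mu I)}\partial_\eta I$, which is the object playing the role of the energy--momentum tensor and, by Theorem \ref{th2}, is divergence-free in each row.
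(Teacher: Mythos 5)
Your proof is correct and follows essentially the same route as the paper: expand $\sigma_\lambda$ to first order in the translation parameter, use the chain rule together with the observation that $\sigma$ has no explicit $z_\mu$-dependence (constant $c$), and recognize the first-order term as a total divergence, yielding $K_{\mu\eta}=\pm\delta_{\mu\eta}\sigma$. The only difference is that you carry the minus sign of $z_\mu-\lambda_\mu$ consistently (so $K_{\mu\eta}=-\delta_{\mu\eta}\sigma$ and $\partial I_\lambda/\partial\lambda_\eta\mid_{\lambda=0}=-\partial_\eta I$), whereas the paper absorbs it; your resulting current is $-T_{\mu\eta}$ with $T_{\mu\eta}$ as in \eqref{4.15}, which is immaterial since both are divergence-free.
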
 
\end{flushleft}

\begin{proof}
\justify
Function $\sigma_{\lambda}$ is defined as $\sigma(\partial_\alpha I_{\lambda})$. Expanding this expression in $\lambda_\mu$ yields
\begin{align*}
 \sigma_{\lambda}  &= \sigma(\partial_\alpha I ) + \lambda_\mu \dfrac{d}{d \lambda_\mu} \sigma_{\lambda} \mid_{\lambda = 0} +  o(| \lambda_\mu|) \\
&= \sigma + \lambda_\mu \dfrac{\partial \sigma}{\partial (\partial_\alpha I)} \dfrac{\partial (\partial_\alpha I)}{\partial z_\mu } +  o(| \lambda_\mu|)  \\
&= \sigma + \lambda_\mu \dfrac{\partial \sigma}{\partial z_\mu} + o(| \lambda_\mu|) 
\end{align*}

where $\alpha, \mu = 0,1$. Hence
\begin{equation} \label{4.13}
\dfrac{d}{d \lambda_{\eta}} \sigma_\lambda \mid_{\lambda = 0} = \dfrac{\partial }{\partial z_{\mu}} \left( \delta_{\mu \eta} \sigma \right)
\end{equation}
Since we have assumed a diffeomorphism with respect to a two dimensional parameter $\lambda_\mu$, the vector field in \eqref{4.2} is augmented to a second order tensor. Therefore, according to \eqref{4.13}, we take $K_{\mu \eta} = \delta_{\mu \eta} \sigma$ to satisfy the condition \eqref{4.2}.

\end{proof}

\begin{flushleft}
\justify
As a result of Proposition \ref{prop1} (together with Theorem \ref{th2}), we have the following system of conservation laws
\begin{equation} \label{4.14}
\dfrac{\partial T_{\mu \eta}}{\partial z_\mu} =0
\end{equation}
where the tensor $T_{\mu \eta}$, defined by
\begin{equation} \label{4.15}
T_{\mu \eta} = \dfrac{\partial \sigma}{\partial \left( \partial_{\mu} I \right)} \cdot \dfrac{\partial I}{\partial z_\eta} - \delta_{\mu \eta} \sigma
\end{equation}
is a quantity analogous to the energy-momentum tensor in classical (and quantum) field theory. The system of conservation laws \eqref{4.14} are accompanied, again, by the conservation of their global counterparts $\int_\Omega T_{00} d\Omega$, and $\int_\Omega T_{01} d\Omega$ as discussed earlier. Moreover, as one expects, 
$$ T_{00} =  \dfrac{\partial \sigma}{\partial \left( \partial_0 I \right)} \cdot \dfrac{\partial I}{\partial z_0} - \sigma = \rho_0 \dfrac{\partial I}{\partial t} \dfrac{\partial I}{\partial t} - \sigma = \pi $$
and the evolution of $\pi$ is governed by 
\begin{equation} \label{4.16}
\dfrac{\partial \pi}{\partial t} + \dfrac{\partial }{\partial x} \left(  \dfrac{\partial I}{\partial t} \dfrac{\partial \sigma}{\partial (\partial_x I)}  \right) = 0
\end{equation}
\end{flushleft}

\begin{flushleft}
\justify
Finally, we can conclude that $T_{\mu \eta}$ is symmetric by writing $\sigma$ as
\begin{equation*} \label{4.17}
\sigma = \dfrac{1}{2}B_{\mu \eta} \dfrac{\partial I}{\partial z_\mu} \dfrac{\partial I}{\partial z_\eta} 
\end{equation*}
where 
\begin{equation*} \label{4.18}
B_{\mu \eta} = \begin{bmatrix}
\rho_0 & 0 \\
0 & - c \\
\end{bmatrix}
\end{equation*}
Since $B_{\mu \eta}$ is symmetric, the tensor $T_{\mu \eta}$ in this case:
\begin{equation*} \label{4.19}
T_{\mu \eta} = B_{ \mu \alpha} \dfrac{\partial I}{\partial z_\alpha} \dfrac{\partial I}{\partial z_\eta} - \delta_{\mu \eta} \sigma
\end{equation*}
must also be symmetric. 
\end{flushleft}

\section{A Hierarchy of Variational Principles}

\begin{flushleft}
\justify
In this section we show that the process sketched thus far is iterative in nature. At each iteration, a corresponding ``Lagrangian'' can be constructed. As a result, we can formulate a least action principle at the $i$th iteration, and all the results of Sections 2--4 can be produced once again. Moreover, the constituents of each new variational principle depend on the preceding level of analysis. Therefore, we can visualize a hierarchy comprising of an infinite number of interrelated Lagrangians and their resulting variational principles. 
\end{flushleft}

\begin{flushleft}
\justify
To clearly illustrate the procedure for obtaining the general iteration, we first consider the following. In Section 2, the rate of energy flux (in the isentropic case) $\partial_t (vS)$ was computed and was shown to be proportional to the gradient of the total energy (i.e. equation \eqref{2.3}). In the same spirit, we can view the term $\partial_t I \frac{\partial \sigma}{\partial (\partial_x I)}$ as the ``energy flux'' associated with $\pi$ in \eqref{4.16}. In fact, this equation is completely analogous to the isentropic balance of energy equation
$$\rho_0 \dfrac{\partial I}{\partial t} - \dfrac{\partial }{\partial x} \left( vS \right)  = \dfrac{\partial I}{\partial t}  + \dfrac{\partial }{\partial x} \left( \dfrac{\partial u}{\partial t}\dfrac{\partial L}{\partial (\partial_x u)} \right) =0$$
Therefore, it is natural to consider the rate of change of $\partial_t I \frac{\partial \sigma}{\partial (\partial_x I)}$:
\begin{align*} 
\dfrac{\partial }{\partial t} \left(\dfrac{\partial I}{\partial t} \frac{\partial \sigma}{\partial (\partial_x I)}  \right) &= - \dfrac{\partial }{\partial t} \left( c \dfrac{\partial I}{\partial x} \dfrac{\partial I}{\partial t}  \right) \\
&= - c \dfrac{\partial^2 I}{\partial x \partial t} \dfrac{\partial I}{\partial t} - c\dfrac{\partial^2 I}{ \partial t^2} \dfrac{\partial I}{\partial x} \\
&= - c \dfrac{\partial^2 I}{\partial x \partial t} \dfrac{\partial I}{\partial t} - \dfrac{c^2}{\rho_0}\dfrac{\partial^2 I}{ \partial x^2} \dfrac{\partial I}{\partial x} \\
&= - \dfrac{c}{\rho_0} \dfrac{\partial }{\partial x} \left( \dfrac{\rho_0}{2} \left( \dfrac{\partial I}{\partial t} \right)^2 \right)  - \dfrac{c}{\rho_0} \dfrac{\partial }{\partial x} \left( \dfrac{c}{2} \left( \dfrac{\partial I}{\partial x}  \right)^2  \right) \\
&= - \dfrac{c}{\rho_0} \dfrac{\partial }{\partial x} \left(\dfrac{\rho_0}{2} \left( \dfrac{\partial I}{\partial t} \right)^2 + \dfrac{c}{2} \left( \dfrac{\partial I}{\partial x} \right)^2 \right) \\
&= -  \dfrac{c}{\rho_0} \dfrac{\partial}{\partial x} \pi \numberthis \label{5.1}
\end{align*}
Equation \eqref{5.1} together with \eqref{4.16} gives
\begin{equation} \label{5.2}
\dfrac{\partial^2 \pi}{\partial t^2} - \dfrac{c}{\rho_0}\dfrac{\partial ^2 \pi}{\partial x^2} =0
\end{equation}
We have obtained the classic wave equation for the quantity $\pi$. Hence, by treating $\pi$ as an independent quantity--as we did with the total energy $I$--all the results proven in Sections 2--4 hold too once we replace $I$ with $\pi$, and treat $\pi$ as the new field.   
\end{flushleft}

\begin{flushleft}
\justify
We, now, present the main result of this section: the preceding calculation can be put into an iterative scheme
\begin{theorem}
Let $L^0 = \dfrac{\rho_0}{2} \left( \dfrac{\partial u^0}{\partial t} \right)^2 - \dfrac{c}{2} \left( \dfrac{\partial u^0}{\partial x} \right)^2$ denote the classic Lagrangian with displacement field $u^0(x,t)=u(x,t)$. Assume $c \doteq \dfrac{\partial^2 L}{\partial (\partial_x u)^2} > 0$ is constant. Then there exist infinitely many density functionals $L^i$, $i=1,2 ... $ satisfying the least action principle
\begin{equation} \label{5.3}
\delta^i \int_0 ^\tau \int_{\mathcal{B}} L^i dx dt =0
\end{equation} 
where $L^i =  \dfrac{\rho_0}{2} \left( \dfrac{\partial u^i}{\partial t} \right)^2 - \dfrac{c}{2} \left( \dfrac{\partial u^i}{\partial x} \right)^2$, $ u^{i+1} = \dfrac{\rho_0}{2} \left( \dfrac{\partial u^i}{\partial t} \right)^2 + \dfrac{c}{2} \left( \dfrac{\partial u^i}{\partial x} \right)^2$, and $\delta^i$ is the variation taken with respect to $u^i$,  $ i=0,1,2 ... \quad .$ 
\end{theorem}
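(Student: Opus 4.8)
The plan is to prove the statement by induction on $i$, where the inductive step is a direct repetition of the computation already carried out for $i=0 \to 1$ (equations \eqref{2.3}--\eqref{2.6}) and for $i=1 \to 2$ (equations \eqref{5.1}--\eqref{5.2}). The key observation that makes the induction self-perpetuating is that the defining recursion $u^{i+1} = \frac{\rho_0}{2}(\partial_t u^i)^2 + \frac{c}{2}(\partial_x u^i)^2$ is precisely the energy density (the ``Hamiltonian density'' $\pi$) built from the Lagrangian $L^i$, and the wave equation for $u^i$ is exactly what is needed to show $u^{i+1}$ again satisfies the same wave equation.

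First I would set up the base case: $u^0 = u$ solves the equation of motion $\rho_0 \partial_{tt} u^0 = \partial_x(cu^0_x) = c\, u^0_{xx}$ associated with $L^0$ (this is the conservation of momentum used in Section 2). The inductive hypothesis is that, for some $i \ge 0$, the field $u^i$ satisfies the wave equation $\partial_{tt}u^i - \frac{c}{\rho_0}\partial_{xx}u^i = 0$. Given this, I would compute $\partial_t$ of the ``energy flux'' associated with $L^i$, namely $\partial_t\!\big(u^i_t\,\partial\sigma^i/\partial(u^i_x)\big) = -\partial_t\big(c\,u^i_x u^i_t\big)$, exactly as in \eqref{5.1}: expand by the product rule, substitute $u^i_{tt} = \frac{c}{\rho_0}u^i_{xx}$ from the inductive hypothesis, and recognize the result as $-\frac{c}{\rho_0}\partial_x\big(\frac{\rho_0}{2}(u^i_t)^2 + \frac{c}{2}(u^i_x)^2\big) = -\frac{c}{\rho_0}\partial_x u^{i+1}$. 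Combining this with the $T_{00}$-conservation law \eqref{4.16} applied at level $i$ — i.e. $\partial_t u^{i+1} + \partial_x\big(u^i_t\,\partial\sigma^i/\partial(u^i_x)\big) = 0$ — yields $\partial_{tt}u^{i+1} - \frac{c}{\rho_0}\partial_{xx}u^{i+1} = 0$, which closes the induction. Once $u^{i+1}$ is known to solve this wave equation, Theorem \ref{th1} (with $I$ replaced by $u^{i+1}$) gives that $u^{i+1}$ is a stationary point of $\int_0^\tau\!\int_{\mathcal{B}} L^{i+1}\,dx\,dt$, which is \eqref{5.3}.

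The one point requiring a little care — and the closest thing to an obstacle — is justifying that the level-$i$ versions of \eqref{4.16} and of Theorem \ref{th1} are legitimately available: this is exactly the claim that ``all the results of Sections 2--4 hold once we replace $I$ with $u^i$.'' I would make this precise by observing that $L^i$ has literally the same functional form as $\sigma$ in \eqref{2.6} (a constant-coefficient quadratic form $\frac{1}{2}B_{\mu\nu}\partial_\mu u^i \partial_\nu u^i$ with the same matrix $B = \mathrm{diag}(\rho_0,-c)$), so every structural result of Sections 2--4 — the Euler--Lagrange equation, the conserved current $T_{\mu\eta}$, and in particular the energy-balance law \eqref{4.16} — applies verbatim with $I \rightsquigarrow u^i$ and $\sigma \rightsquigarrow L^i$. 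The induction therefore never leaves the class of constant-coefficient quadratic Lagrangians, which is why the scheme perpetuates indefinitely; the hypothesis $c$ constant is what guarantees this closure (for non-constant $c$ the recursion would generate $x$-dependent coefficients and the argument would need the Appendix A machinery instead). Finally I would remark that, having established the wave equation at every level, Corollary \ref{cor1}, the Hamiltonian/bracket formalism of Section 3, and Noether's theorem of Section 4 all likewise reproduce at level $i$, so each iteration carries a complete analytic mechanics, as claimed.
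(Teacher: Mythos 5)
Your proposal is correct and follows essentially the same inductive argument as the paper: compute the rate of change of the level-$i$ energy flux, combine it with the level-$i$ energy balance $\partial_t u^{i+1} + \partial_x\bigl(\partial_t u^i\,\partial L^i/\partial(\partial_x u^i)\bigr)=0$, and conclude that $u^{i+1}$ satisfies the constant-coefficient wave equation, hence the Euler--Lagrange equation of $L^{i+1}$. The only minor difference is that the paper obtains that energy balance by directly differentiating the definition of $u^{i+1}$ in time and substituting the inductive wave equation, whereas you invoke the level-$i$ instance of \eqref{4.16}; the two are equivalent given your (correct) observation that $L^i$ has the same constant-coefficient quadratic form as $\sigma$, so nothing is lost either way.
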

\end{flushleft}

\begin{proof}
We prove by induction.

\begin{flushleft}
\justify
For $i=0$, we obtain Hamilton's principle of least action. 
\end{flushleft}

\begin{flushleft}
\justify
Assume theorem holds for the $i$-th iteration. We look to construct the $(i+1)$th iteration. Assume that $u_i$ is a solution to the $i$th variational problem. First we compute the Euler-Lagrange equation at this iteration
\begin{equation*} \label{5.4}
 \dfrac{d}{dt} \left( \dfrac{\partial L^i}{\partial (\partial_t u^i)} \right) + \dfrac{\partial }{\partial x} \left( \dfrac{\partial L^i}{\partial (\partial_x u^i)} \right) = \rho_0 \dfrac{\partial^2 u^i}{\partial t^2} - c \dfrac{\partial^2 u^i}{\partial x^2} = 0
\end{equation*}
Next consider 
\begin{align*} 
\dfrac{\partial u^{i+1}}{\partial t} &= \rho_0 \dfrac{\partial u^i}{\partial t}  \dfrac{\partial^2 u^i}{\partial t^2} + c \dfrac{\partial u^i}{\partial x}\dfrac{\partial^2 u^i}{\partial t \partial x} \\
&= c \dfrac{\partial u^i}{\partial t}  \dfrac{\partial^2 u^i}{\partial x^2}  + c \dfrac{\partial u^i}{\partial x}\dfrac{\partial^2 u^i}{\partial t \partial x} \\
&= \dfrac{\partial }{\partial x} \left( \dfrac{\partial u^i}{\partial t} c \dfrac{\partial u^i}{\partial x} \right) \\
&= - \dfrac{\partial }{\partial x} \left( \dfrac{\partial u^i}{\partial t} \dfrac{\partial L^i}{\partial (\partial_x u^i)} \right)
\end{align*}
Thus, we have the local balance of energy
\begin{equation} \label{5.5}
\dfrac{\partial u^{i+1}}{\partial t} + \dfrac{\partial }{\partial x} \left( \dfrac{\partial u^i}{\partial t} \dfrac{\partial L^i}{\partial (\partial_x u^i)} \right) = 0
\end{equation}
\end{flushleft}
Finally, we calculate the energy flux rate
\begin{align*} 
\dfrac{\partial }{\partial t} \left( \dfrac{\partial u^i}{\partial t} \dfrac{\partial L^i}{\partial (\partial_x u^i)} \right) &= - \dfrac{\partial }{\partial t} \left( \dfrac{\partial u^i}{\partial t} c \dfrac{\partial u^i}{\partial x} \right) \\
&= - c \dfrac{\partial^2 u^i}{\partial t^2} \dfrac{\partial u^i}{\partial x} - c \dfrac{\partial u^i}{\partial t} \dfrac{\partial^2 u^i}{\partial t \partial x} \\
&= - \dfrac{c^2}{\rho_0} \dfrac{\partial^2 u^i}{\partial x^2} \dfrac{\partial u^i}{\partial x} -  c \dfrac{\partial u^i}{\partial t} \dfrac{\partial^2 u^i}{\partial t \partial x} \\
&= -c \dfrac{\partial }{\partial x} \left( \left(\dfrac{\partial u^i}{\partial t} \right)^2 \right)  -\dfrac{c^2}{\rho_0} \dfrac{\partial }{\partial x} \left( \left(\dfrac{\partial u^i}{\partial x} \right)^2 \right) \\
&= -\dfrac{c}{\rho_0} \dfrac{\partial }{\partial x} \left( \rho_0 \left(\dfrac{\partial u^i}{\partial t} \right)^2 + c \left(\dfrac{\partial u^i}{\partial x} \right)^2  \right) \\
&= -\dfrac{c}{\rho_0} \dfrac{\partial }{\partial x} u^{i+1} \numberthis \label{5.6}
\end{align*}
Combining \eqref{5.6} with \eqref{5.5} we obtain
\begin{equation*} \label{5.7}
\rho_0 \dfrac{\partial^2 u^{i+1}}{\partial t^2} - c \dfrac{\partial^2 u^{i+1}}{\partial x^2} = 0
\end{equation*}
Written differently
\begin{equation*} \label{5.8}
 \dfrac{d}{dt} \left( \dfrac{\partial L^{i+1}}{\partial (\partial_t u^{i+1})} \right) + \dfrac{\partial }{\partial x} \left( \dfrac{\partial L^{i+1}}{\partial (\partial_x u^{i+1})} \right) = 0 
\end{equation*}
Therefore $u^{i+1}$ is a stationary point for the $(i+1)$th variational problem.

\end{proof}

\begin{remark}
\justify
\leavevmode
A more general variational hierarchy theorem holds for non-constant $c$.  In Appendix A, we sketch the general method of obtaining the second iteration (i.e. $u^2 = \pi$), which then can be extended to higher iterations. 
\end{remark}

\begin{flushleft}
\justify
Since at each iteration the corresponding scalar field $u^i$ satisfies the wave equation \eqref{5.2}, we have an infinite number of integrals of motion

\begin{corollary}[Integrals of motion] \label{cor2}
If $u^i$ solves the variational problem \eqref{5.3} for $i =0,1,2,...$ and has compact support on $\mathcal{B}$. Then for every $i \in \mathbb{N}$
$$\mathcal{H}^i(t) =  \int_{\mathcal{B}} \dfrac{1}{2} \left( \rho_0 \left( \dfrac{\partial u^i}{\partial t} \right)^2 + c \left( \dfrac{\partial u^i}{\partial x} \right)^2 \right) \, dx$$
is an integral of motion, that is $\mathcal{H}^i(t) = \mathcal{H}^i(0)$.
\end{corollary}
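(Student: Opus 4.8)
The plan is to recognize that Corollary~\ref{cor2} is simply Corollary~\ref{cor1} applied at every level of the hierarchy, so the proof is the classical energy estimate for the wave equation run with the field $u^i$ in place of the total energy $I$.

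First I would invoke Theorem~3: since $u^i$ solves the $i$th variational problem \eqref{5.3} and $c>0$ is constant, the Euler--Lagrange equation computed in the proof of that theorem shows $u^i$ satisfies the constant-coefficient wave equation
\begin{equation*}
\rho_0\,\dfrac{\partial^2 u^i}{\partial t^2} - c\,\dfrac{\partial^2 u^i}{\partial x^2} = 0 \qquad \text{on } \mathcal{B}\times[0,\tau]
\end{equation*}
(for $i=0$ the classical wave equation, for $i\ge 1$ equation \eqref{5.2} with $\pi$ replaced by $u^i$). Then, differentiating $\mathcal{H}^i$ under the integral sign --- permissible because the compact support in $x$ together with the smoothness assumed throughout makes the integrand and its $t$-derivative jointly continuous and compactly supported in space --- gives
\begin{equation*}
\dfrac{d}{dt}\mathcal{H}^i(t) = \int_{\mathcal{B}} \left( \rho_0\,\dfrac{\partial u^i}{\partial t}\,\dfrac{\partial^2 u^i}{\partial t^2} + c\,\dfrac{\partial u^i}{\partial x}\,\dfrac{\partial^2 u^i}{\partial x\,\partial t} \right) dx .
\end{equation*}
Substituting $\rho_0\,\partial_t^2 u^i = c\,\partial_x^2 u^i$ into the first term collapses the integrand to the total $x$-derivative $c\,\partial_x\!\big(\partial_t u^i\,\partial_x u^i\big)$, so that
\begin{equation*}
\dfrac{d}{dt}\mathcal{H}^i(t) = c\left[\,\dfrac{\partial u^i}{\partial t}\,\dfrac{\partial u^i}{\partial x}\,\right]_{\partial\mathcal{B}} = 0,
\end{equation*}
the boundary term vanishing by the compact-support hypothesis (or by sufficiently fast decay if $\mathcal{B}=\mathbb{R}$). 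Hence $\mathcal{H}^i$ is constant in $t$, i.e. $\mathcal{H}^i(t)=\mathcal{H}^i(0)$; since $i\in\mathbb{N}$ was arbitrary this holds at every rung of the hierarchy simultaneously.

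There is essentially no obstacle here: the argument is independent at each $i$ and borrows from the hierarchy construction only the single fact, supplied by Theorem~3, that $u^i$ again solves the wave equation. The only points worth a sentence are the justification for differentiating under the integral and the vanishing of the spatial boundary term --- both immediate once ``compact support on $\mathcal{B}$'' is read as ensuring the relevant traces vanish. Equivalently, one could state the corollary as the assertion that Corollary~\ref{cor1} applies verbatim with $I$ replaced by $u^i$.
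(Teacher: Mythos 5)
Your proof is correct and is exactly the argument the paper relies on: the paper gives no separate proof of Corollary 2, justifying it only by the observation that each $u^i$ satisfies the constant-coefficient wave equation (via Theorem 3), so that Corollary 1's standard energy estimate applies verbatim with $I$ replaced by $u^i$. Your explicit computation — differentiating under the integral, substituting $\rho_0\,\partial_t^2 u^i = c\,\partial_x^2 u^i$, and killing the boundary term by compact support — is precisely that estimate spelled out.
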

\end{flushleft}

\begin{flushleft}
\justify
Since at each level of analysis we have a corresponding Lagrangian $L^i$, we can define the $i$-th Hamiltonian $H^i$ via the Legendre transform \eqref{3.2} by first defining the $i$-th momentum
\begin{equation*} \label{5.9}
p^i = \dfrac{\partial L^i}{\partial (\partial_t u^i)}
\end{equation*} 
Then $H^i$ will have the form
\begin{equation*} \label{5.10}
H^i(p^i,\partial_x u^i) = \dfrac{1}{2\rho_0}(p^i)^2+\dfrac{c}{2} \left( \dfrac{\partial u^i}{\partial x} \right)^2
\end{equation*}
Clearly for $i=1$, $p^1 = J$ and $H^1= \pi$ given in Section 3. 
\end{flushleft}

\begin{flushleft}
\justify
Therefore, at the $i$-th iteration, Hamilton's equations are given
\begin{align*}
\dfrac{\partial u^i}{\partial t} = \dfrac{\delta \mathcal{H}^i}{\delta p^i} \\
\dfrac{\partial p^i}{\partial t}   = - \dfrac{\delta \mathcal{H}^i}{\delta u^i} 
\end{align*}
And in a similar fashion to Section 3, we can define the Poisson brackets $ \lbrace, \rbrace^i $ for each $i$. 
\end{flushleft}

\begin{flushleft}
\justify
Finally, we touch on the key results of Section 4. We begin by rewriting the Euler-Lagrange equations at the $i$-th level in the space-time coordinates $z_\mu$:
\begin{equation*} 
\dfrac{\partial }{\partial z_{\mu} } \left(  \dfrac{\partial L^i }{\partial (\partial_{\mu} u^i)} \right) = 0
\end{equation*}
where the summation is over the subscripts (lower indices) only.
\end{flushleft}

\begin{flushleft}
\justify
For each $i$ fixed we obtain a conserved current $P_\mu^i$ associated with $L^i$ by Noether's theorem (Theorem 2): 
\begin{equation*} \label{5.13}
\dfrac{\partial P_\mu^i}{ \partial z_\mu } = 0
\end{equation*}
where
\begin{equation*} \label{5.14}
P_{\mu}^i = \dfrac{\partial L^i}{\partial (\partial_{\mu} u^i)} \cdot \dfrac{\partial u^i_{\lambda}}{\partial \lambda} \mid_{\lambda=0} - K_{\mu}^i
\end{equation*}
for $i=0,1,2 ... \, .$
\end{flushleft}

\begin{flushleft}
\justify
Particularly, if we consider the one-parameter group of transformation \eqref{4.10}, we can show, as done in Section 4, that $L^i$ is invariant under space-time translation:
\begin{equation*} \label{5.15}
\dfrac{d}{d \lambda_{\eta}} L^i_\lambda \mid_{\lambda = 0} = \dfrac{\partial }{\partial z_{\mu}} \left( \delta_{\mu \eta} L^i \right)
\end{equation*}
Therefore, the $i$-th energy-momentum tensor:
\begin{equation*} \label{5.16}
T_{\mu \eta}^i = \dfrac{\partial L^i}{\partial \left( \partial_{\mu} u^i \right)} \cdot \dfrac{\partial u^i}{\partial z_\eta} - \delta_{\mu \eta} L^i
\end{equation*}
satisfies
\begin{equation*} \label{5.17}
\dfrac{\partial T^i_{\mu \eta}}{\partial z_\mu} = 0
\end{equation*}
for each $i=0,1,2 ... \, .$
\end{flushleft}

\begin{flushleft}
\justify
So in addition to the integrals of motion obtained in Corollary 2, we have a host of other conservation laws (both global and local) at each iteration emanating from Noether's theorem. Hence, in aggregate, our hierarchy contains an infinite number of conservation laws. 
\end{flushleft}

\section{Application to Dissipative Systems}

\begin{flushleft}
\justify
We now turn our attention to the original problem of heat flow given by the first law of thermodynamics \eqref{2.1}. Our main goal here is to show that the total energy field $I$ solves a hyperbolic PDE as well; specifically a non-homogeneous wave equation in the total energy. The dissipative effects appear as additional terms independent of the field $I$. A more general action functional can then be constructed that includes dissipation. Therefore, a least action principle, Lagrangian-Hamiltonian formalism, and a (modified) Noether's theorem will all follow as well in the dissipative case. 
\end{flushleft}

\begin{flushleft}
\justify
We proceed as in Section 2 and consider the rate of change in the total energy flux: $G = vS -q$:
\begin{align*}
\dfrac{\partial G}{\partial t} &=S \dfrac{\partial v }{\partial t}+ v \dfrac{\partial S} {\partial t} - \dfrac{\partial q }{\partial t} \\
&= \dfrac{1}{\rho_0} S  \dfrac{\partial S } { \partial x} +  v \dfrac{\partial S}{\partial t}- \dfrac{\partial q }{\partial t}   \numberthis \label{6.1}
\end{align*}
Since the internal energy and the stress are functions of both $\partial_x u$ and $s$, we obtain additional terms in \eqref{6.1} involving the entropy $s$ in addition to those obtained in \eqref{2.3}
\begin{align*}
\dfrac{\partial G}{\partial t}  &= \dfrac{S}{\rho_0}  \dfrac{\partial S}{\partial (\partial_x u)} \dfrac{\partial^2 u }{\partial x^2 }  + v  \dfrac{\partial S}{\partial (\partial_x u)} \dfrac{\partial u^2 }{\partial t \partial x}   + \dfrac{S}{\rho_0}  \dfrac{\partial S}{\partial s} \dfrac{\partial s}{\partial x } +  v \dfrac{\partial S}{\partial s} \dfrac{\partial s}{\partial t} - \dfrac{\partial q }{\partial t} \\
&= \dfrac{\partial S}{\partial (\partial_x u)} \left( \dfrac{\partial e}{\partial (\partial_x u)} \dfrac{\partial^2 u }{\partial x^2 } + v \dfrac{\partial v}{\partial x} \right) + \dfrac{\partial S}{\partial s} \left( \dfrac{\partial e}{\partial (\partial_x u) } \dfrac{\partial s}{\partial x} + v \dfrac{\partial s}{\partial t} \right) - \dfrac{\partial q }{\partial t} \\
&=  \dfrac{\partial S}{\partial (\partial_x u)} \left( \dfrac{\partial e}{\partial x } +  \dfrac{\partial }{\partial x} \left( \frac{1}{2} v^2 \right) \right) + \dfrac{\partial S}{\partial s} \left( \dfrac{\partial e}{\partial (\partial_x u) } \dfrac{\partial s}{\partial x} + v \dfrac{\partial s}{\partial t} \right) - \dfrac{\partial S}{\partial (\partial_x u)} \dfrac{\partial e}{\partial s} \dfrac{\partial s}{\partial x} - \dfrac{\partial q }{\partial t} \\
&= c \dfrac{\partial I}{\partial x} + \dfrac{\partial S}{\partial s} v \dfrac{\partial s}{\partial t} + \dfrac{\partial s}{\partial x} S^2 \dfrac{\partial }{\partial (\partial_x u)} \left( \dfrac{\theta}{S} \right)- \dfrac{\partial q }{\partial t} \\
&= c \dfrac{\partial I}{\partial x} + \dot{D}  \numberthis \label{6.2}
\end{align*}
where we have defined
\begin{equation*} \label{6.3}
\dot{D} = \dfrac{\partial S}{\partial s} v \dfrac{\partial s}{\partial t} + \dfrac{\partial s}{\partial x} S^2 \dfrac{\partial }{\partial (\partial_x u)} \left( \dfrac{\theta}{S} \right)- \dfrac{\partial q }{\partial t}
\end{equation*}
Clearly, for a conservative system $\dot{D}$ is identically zero and we recover \eqref{2.3} as expected. 
\end{flushleft}

\begin{flushleft}
\justify
Now by equation \eqref{2.1} and \eqref{6.2} we obtain
\begin{equation} \label{6.4}
\rho_0 \dfrac{\partial I^2}{\partial t^2} - \dfrac{\partial }{\partial x} \left( c \dfrac{\partial I}{\partial x} \right) - \dfrac{\partial \dot{D}}{\partial x} =0
\end{equation}
Since $\dot{D}$ is functionally independent of $I$, equation \eqref{6.4} is the Euler-Lagrange equations associated with the functional   
\begin{equation} \label{6.5}
\Sigma(I) = \int_{\mathcal{B}} \sigma (\partial_t I, \partial_x I,x,t) \, dx \doteq \int_{\mathcal{B}} \left(\frac{ \rho_0}{2} \left( \dfrac{\partial I}{\partial t} \right)^2 - \frac{c}{2} \left( \dfrac{\partial I}{\partial x} \right)^2 - \dot{D} \dfrac{\partial I}{\partial x} \right) \, dx
\end{equation}
\end{flushleft}

\begin{flushleft}
\justify 
Therefore, the evolution of the total energy $I$ in an arbitrary interval $[0, \tau]$ coincides with the stationary points of 
\begin{equation} \label{6.6}
\int_{0}^{\tau} \Sigma dt
\end{equation}
provided the boundary conditions $\delta I \mid _{t = 0} = \delta I \mid_{\tau =0} = 0$ are satisfied. This is the least action principle in the dissipative case.
\end{flushleft}

\begin{remark}
\justify
\leavevmode
\begin{enumerate} [i]
\item The quantity $\dot{D}$ should be thought of as a non-homogeneous term independent of $I$. This is not unusual in the calculus of variations; for example consider the following Dirichlet functional:

$$ \Phi (w) = \int_{\mathcal{B}} \phi(\nabla w, w, x) dx = \int_{\mathcal{B}} \dfrac{1}{2} |\nabla w|^2 - wf(x) dx  $$

So while $\phi$ includes the non-homogeneous term $f$, it is functionally dependent on only the basic fields $\nabla \phi$ and $\phi$. In our case in \eqref{6.5}, our basic fields include the derivatives of $I$ alone. Therefore, the term $\dot{D}$ is comparable to $f$ in the above Dirichlet functional and is independent of $I$. Another example is given in Table 1 below. 

\item The Euler-Lagrange equations for \eqref{6.5} share the same form with the conservative case since the variation is taken only with respect to the total energy I
\begin{equation} \label{6.7}
\dfrac{\partial}{\partial t} \left( \dfrac{\delta \Sigma }{\delta (\partial_t I) } \right) - \dfrac{\delta \Sigma}{\delta I} = 0 \ .
\end{equation}
Furthermore, the evolution of the dissipative material is entirely governed by a set of two coupled Euler-Lagrange equations: \eqref{6.7} coupled with 
\begin{equation}  \label{6.7a}
\dfrac{\partial }{\partial t}  \left( \dfrac{\delta \mathcal{L}}{\delta (\partial_t u)} \right) - \dfrac{\delta \mathcal{L}}{\delta u} = 0 \ ,
\end{equation}
where the Lagrangian $\mathcal{L}$ is
$$ \mathcal{L} =  \int_{\mathcal{B}} \dfrac{\rho_0}{2}  \left(\dfrac{\partial u}{\partial t} \right)^2 - e(\partial_x u ,s) d x \ .$$
Therefore, the functionals $\Sigma$ and $\mathcal{L}$ completely determine the state of the dissipative system at each instant in time, and  each must be varied with respect to $u$ and $I$, respectively, to obtain the evolution equations. As is standard, in order to obtain the unique physically correct solution (at least for short time), the coupled system \eqref{6.7}--\eqref{6.7a} needs to be supplemented by appropriate initial-boundary conditions.
\end{enumerate}
\end{remark}

\begin{flushleft}
\justify
In the example below, we obtain a special form of functional $\Sigma$. Throughout the example we assume that the reference configuration is stress free i.e. $S(\partial_x u, s)\mid_{0 } =0$, and the temperature at the reference configuration $\theta \mid_{0 } = \theta_0$, where we have written $\psi \mid_{0 } $ to mean that the function $\psi$ is evaluated at $\partial_x u = 0$ and $s = 0$.    \\
\end{flushleft}

\emph{Example (heat flow in a thermoelastic medium)}

\begin{flushleft}
\justify
We begin first by writing $\sigma$ as the sum of two quantities:  $\sigma = \sigma_h + R_h$, where we have defined

\begin{equation*} \label{6.8}
\sigma_h =  \dfrac{ \rho_0}{2}\left( \dfrac{\partial I}{\partial t} \right)^2 + \dfrac{\partial q}{\partial t}\dfrac{\partial I}{\partial x}
\end{equation*}

\begin{equation*} \label{6.9}
R_h =  - \frac{c}{2} \left( \dfrac{\partial I}{\partial x} \right)^2 + \left( \dfrac{\partial S}{\partial s} v \dfrac{\partial s}{\partial t} + \dfrac{\partial s}{\partial x} S^2 \dfrac{\partial }{\partial (\partial_x u)} \left( \dfrac{\theta}{S} \right) \right) \dfrac{\partial I}{\partial x}
\end{equation*}
\end{flushleft}

\begin{flushleft}
\justify
We examine the \emph{linear} theory of thermoelasticity which is charechtrized by ``small'' thermomechanical deformations. Thereofore, it is appropriate to rescale the displacement field $\epsilon u$ for suitable non-dimensional positive small parameter $\epsilon $ \cite{dal2002linearized}, and consider the temperature $\theta$ to be everywhere close to $\theta_0$. Since both the displacement field and the temperature undergo small changes, the entropy would also be rescaled to $\epsilon s$. Linearizing the theory entails keeping terms up to the order of $\epsilon^2$ in the functional $\Sigma$, while terms up to the order of $\epsilon$ alone are considered in the Euler-Lagrange equations. 
\end{flushleft}

\begin{flushleft}
\justify
By expanding the constitutive equations for the stress and temperature around $\partial_x u =0$ and $s=0$, while recalling $S \mid_0  =0$, we obtain
\begin{equation*}   \label{6.10}
S(\partial_x u,s; \epsilon) =  \dfrac{\partial S}{\partial (\partial_x u)}  \mid_0 \epsilon \dfrac{\partial u}{\partial x} + \dfrac{\partial S}{\partial s}  \mid_0 \epsilon s + O(\epsilon^2) 
\end{equation*}
\begin{equation}  \label{6.11}
\theta(\partial_x u,s; \epsilon) =\theta_0 +\dfrac{\partial \theta}{\partial (\partial_x u)}  \mid_0  \epsilon \dfrac{\partial u}{\partial x} +  \dfrac{\partial \theta}{\partial s}  \mid_0 \epsilon s + O(\epsilon^2) 
\end{equation}
\end{flushleft}

\begin{flushleft}
\justify
By substituting the above linearizations together with the rescaled displacement and entropy fields into $R_h$, the  second order approximation (for constant $c$) gives us:
\begin{equation} \label{6.12}
R_h = - \dfrac{c}{2}\left( \dfrac{\partial I}{\partial x} \right)^2 + \theta_0 c \dfrac{\partial s}{\partial x} \dfrac{\partial I}{\partial x}
\end{equation}
While we have not rescaled the total energy explicitly, a simple calculation reveals that up to the order of $\epsilon$, we have $\dfrac{\partial I}{\partial x} = \theta_0 \dfrac{\partial s}{\partial x}$. Therefore, it is justified to retain the terms involving the total energy in \eqref{6.12}. However, for the stationary principle to hold (see equation \eqref{6.6}), we maintain the basic form of $R_h$ in terms of total energy field $I$ and not the displacement and entropy fields.  
\end{flushleft}

\begin{flushleft}
\justify
The Euler-Lagrange equations for the system are
\begin{align*}
\dfrac{\partial}{\partial t} \left( \dfrac{\delta (\Sigma_h +\mathcal{R}_h) }{\delta \dot{I}} \right) - \dfrac{\delta (\Sigma_h + \mathcal{R}_h)}{\delta I} = 0 \\
\end{align*}
which reduce to

\begin{equation*} \label{6.13}
\dfrac{\partial}{\partial t} \left( \dfrac{\delta \Sigma_h }{\delta \dot{I}} \right) - \dfrac{\delta (\Sigma_h + \mathcal{R}_h)}{\delta I} = 0 
\end{equation*}
However, in the linear approximation (i.e. up to the order of $\epsilon$) we have in fact
\begin{equation*} \label{6.14}
\dfrac{\delta  \mathcal{R}_h}{\delta I} = 0
\end{equation*}
Therefore, in the linear approximation we get
\begin{equation*} \label{6.15}
\dfrac{\partial}{\partial t} \left( \dfrac{\delta \Sigma_h }{\delta \dot{I}} \right) - \dfrac{\delta \Sigma_h}{\delta I} = 0
\end{equation*}
\end{flushleft}

\begin{flushleft}
\justify
Functional $\Sigma_h$ determines the evolution of the energy for a dissipative linear thermoelastic medium. Equation \eqref{6.15} reads
\begin{align*} \label{6.16}
\dfrac{\partial }{\partial t} \left( \rho_0 \dfrac{\partial I}{\partial t} +  \dfrac{\partial q}{\partial x} \right) = \dfrac{\partial }{\partial t} \left( \rho_0 \theta_0 \dfrac{\partial s}{\partial t} + \dfrac{\partial q}{\partial x} \right)  = 0
\end{align*}
\end{flushleft}

\begin{flushleft}
\justify
We have obtained the classic entropy balance as a constant of motion. Substituting for the linear constitutive law \eqref{6.11} (up to the order of $\epsilon$) together with Fourier's law $q = -k\nabla \theta$, we produce the classic evolution-diffusion equation of thermoelasticity
\begin{equation} \label{6.17}
\dfrac{\partial }{\partial t}\left( \rho_0 c_0 \dfrac{ \partial \theta}{\partial t} - \rho_0 \gamma \theta_0  \dfrac{\partial^2 u}{\partial x \partial t} - k \nabla^2 \theta \right)  = 0
\end{equation}
where, $c_0 = \frac{\theta_0}{\partial \theta / \partial s} \mid_0 $, $\gamma =\frac{c_0}{\theta_0} \frac{\partial \theta}{\partial (\partial_x u)} \mid_0 $, and $k$ are the heat capacity, stress-temperature modulus, and conductivity constant, respectively. If we assume the fields $\theta$ and $u$ have compact support on $\mathcal{B}$ then the classic evolution-diffusion equation of thermoelasticity can be readily recovered. 

\end{flushleft}

\begin{flushleft}
\justify
In the absence of mechanical processes, that is $u = 0$, equation \eqref{6.17} reduces to the heat equation. In fact, in this case $c = 0$, $\dot{D}= - \partial_t q $, which gives $R_h = 0$. The heat equation is, therefore, also obtained from the functional $\Sigma_h$. This concludes our example.\\
\end{flushleft}

\begin{remark}
\justify
\leavevmode
Since we have used Fourier's law to model heat conduction, it is expected that the classical equations of thermoelasticity and the heat equation follow; however, this is not the only possible model. For example, in the framework of Rational Extended Thermodynamics the quantity $q + k\, \nabla \theta \neq 0 $, but is proportional to $\partial_t q$, the proportionality constant is called the \emph{relaxation time}\footnote{This is sometime referred to as Cattaneo’s Law.}. Within this framework the heat equation has an additional hyperbolic term which, for long time scales, describes a reversible process \cite{jou1996extended}. 

\end{remark}

\begin{flushleft}
\justify
The Hamiltonian and bracket formalisms for dissipative systems follow similarly as in Section 3. Define the Legendre transform of $\sigma$: 
\begin{equation} \label{6.18}
\pi(\nabla I,J) \doteq J \cdot \dot{I} - \sigma
\end{equation}
which gives 
\begin{equation*} \label{6.19}
\Pi = \int_{\mathcal{B}} \left( \dfrac{1}{2\rho_0} J^2 +\dfrac{c}{2} \left( \dfrac{\partial I}{\partial x} \right)^2 + \dot{D} \dfrac{\partial I}{\partial x}  \right) d x
\end{equation*} 
and the evolution equations can be rewritten once again in terms of $\Pi$:

\begin{align*}
\dot{I}= \dfrac{\delta \Pi}{\delta J} \\
\dot{J}  = - \dfrac{\delta \Pi}{\delta I}
\end{align*}

or in terms of the Poisson bracket:

\begin{align*}
\dot{I}= \lbrace I, \Pi \rbrace  \\
\dot{J}  = \lbrace J, \Pi \rbrace
\end{align*}

where the $\lbrace,\rbrace$ is defined \eqref{3.6} and the phase space is given in Section 3. 
\end{flushleft}

\begin{flushleft}
\justify
Lastly, we consider some variational symmetries. If we consider the one-parameter transformation in terms of the total energy: $ \lambda \longmapsto I_{\lambda} = I(z_{\mu}; \lambda)$ with $\sigma_\lambda = \sigma (\partial_{\mu} I_\lambda)$, then it is not hard to see that Noether's theorem, as presented in Section 4, holds. Hence, the simple transformation: $I_\lambda = I+\lambda\overline{I}$ leaves $\sigma$ invariant and so, by equation \eqref{4.3}--\eqref{4.4}, we obtain
$$\dfrac{\partial }{\partial z_{\mu} } \left(  \dfrac{\partial \sigma }{\partial (\partial_{\mu} I)} \right) = 0 $$ 
which is nothing other than the energy balance in hyperbolic form \eqref{6.4}.
\end{flushleft}

\begin{flushleft}
\justify
We examine, once again, the basic space-time translations (for simplicity we consider $c$ constant--see Appendix A for the non-constant $c$)
\begin{equation} \label{6.22}
\lambda \longmapsto z^\lambda_\mu = z_\mu - \lambda_\mu = z_\mu - \lambda_\eta \delta_{\mu \eta}
\end{equation}
The question here is whether or not $\sigma$ remains invariant under the action of this transformation now that $\sigma$ depends not only on the parameterized field $I_\lambda$ but also on $\dot{D}_\lambda$. However, for the conservation law \eqref{4.4} to hold, the calculation in \eqref{4.5} must be justified. As shown below this is not true  for the basic space-time translations \eqref{6.22}. This fact will result, as is shown below, in non-homogeneous conservation laws, which are still valuable in the context of constructing a variational hierarchy (see Appendix A).
\end{flushleft}

\begin{flushleft}
\justify
Under space-time translations:

\begin{subequations} \label{6.23}
\begin{align} 
I_\lambda &= I(z_\mu - \lambda_\mu) = I(z_\mu) + \lambda_\mu \dfrac{\partial I_{\lambda}}{\partial z_\mu} \mid_{\lambda = 0} + O(| \lambda_\mu|^2) \label{6.23a} \\
\dot{D}_\lambda &= \dot{D} (z_\mu - \lambda_\mu) = \dot{D}(z_\mu) + \lambda_\mu \dfrac{\partial \dot{D}_{\lambda}}{\partial z_\mu} \mid_{\lambda = 0} + O(| \lambda_\mu|^2) \label{6.23b}
\end{align}
\end{subequations}
\end{flushleft}

\begin{flushleft}
\justify
Next, we show $\sigma_{\lambda}$ satisfies \eqref{4.2}
\begin{proposition}  \label{prop2}
Function $\sigma$ is invariant under the space-time translation \eqref{4.2}.
\end{proposition}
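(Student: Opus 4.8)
The plan is to mimic, essentially verbatim, the proof of Proposition \ref{prop1}, the only genuinely new feature being that $\sigma$ in \eqref{6.5} now carries explicit $z_\mu$-dependence through the source $\dot{D}$. First I would record $\sigma$ as a smooth function of three slots, $\sigma=\sigma(\partial_t I,\partial_x I,\dot{D})=\tfrac{\rho_0}{2}(\partial_t I)^2-\tfrac{c}{2}(\partial_x I)^2-\dot{D}\,\partial_x I$, so that every occurrence of $z_\mu$ in $\sigma_\lambda$ enters only through the parameterized fields $\partial_\mu I_\lambda$ and $\dot{D}_\lambda$. The essential observation is then that, under the space-time translation \eqref{6.22}, \emph{both} of these are pure translates of the corresponding unperturbed fields: from \eqref{6.23a}--\eqref{6.23b} one has $\partial_\mu I_\lambda(z)=(\partial_\mu I)(z-\lambda)$ and $\dot{D}_\lambda(z)=\dot{D}(z-\lambda)$, and these are exact identities, not merely first-order expansions.

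Consequently $\sigma_\lambda(z)=\sigma(z-\lambda)$ as a composite function of $z$ alone, and I would differentiate this identity in the translation parameter at $\lambda=0$. Applying the chain rule through all three slots gives
\[
\frac{d}{d\lambda_\eta}\sigma_\lambda\Big|_{\lambda=0}
=\frac{\partial\sigma}{\partial(\partial_\alpha I)}\frac{\partial(\partial_\alpha I)}{\partial z_\eta}
+\frac{\partial\sigma}{\partial\dot{D}}\frac{\partial\dot{D}}{\partial z_\eta}
=\frac{\partial\sigma}{\partial z_\eta}
=\frac{\partial}{\partial z_\mu}\bigl(\delta_{\mu\eta}\sigma\bigr),
\]
which is exactly the pattern \eqref{4.13} met in Proposition \ref{prop1}, with the \emph{same} second-order tensor $K_{\mu\eta}=\delta_{\mu\eta}\sigma$. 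Integrating over $\mathcal{B}_T$ then reproduces the invariance identity \eqref{4.2} verbatim, which establishes the claim. The only formal difference from Proposition \ref{prop1} is that the $z_\eta$-derivative of $\sigma$ is now a total derivative because it also collects the explicit contribution $(\partial\sigma/\partial\dot{D})\,\partial_\eta\dot{D}$ coming from the $\dot{D}$-slot, rather than solely the implicit one running through $\partial_\alpha I$.

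I do not expect any serious obstacle inside the proof of the proposition itself --- it is a one-line extension of Proposition \ref{prop1}. The point worth emphasizing, and the reason the statement is isolated here, is that this invariance is \emph{deceptive}: unlike the conservative case it will not hand us a conservation law. The reason, to be taken up immediately afterwards, is that the very term $(\partial\sigma/\partial\dot{D})\,(\partial\dot{D}_\lambda/\partial\lambda)$ which here combines harmlessly into a total divergence is precisely the term that spoils the integration-by-parts step \eqref{4.5} in the proof of Noether's theorem: since $\dot{D}$ is functionally independent of $I$, this contribution cannot be absorbed using the Euler--Lagrange equation \eqref{6.4}, and what survives is a genuine source. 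Tracking that residual --- which yields the non-homogeneous balance laws alluded to before the proposition --- is where the real work lies, and it is logically separate from the invariance statement proved here.
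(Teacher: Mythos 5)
Your proposal is correct and follows essentially the same route as the paper's own proof: expand $\sigma_\lambda$ under the translation, apply the chain rule through both the $\partial_\alpha I$ and $\dot{D}$ slots so the result collapses to the total derivative $\partial\sigma/\partial z_\eta$, and take $K_{\mu\eta}=\delta_{\mu\eta}\sigma$ in \eqref{4.2}. Your closing remark that this invariance nevertheless fails to produce a conservation law, because the $(\partial\sigma/\partial\dot{D})\,\partial_\eta\dot{D}$ term cannot be absorbed via the Euler--Lagrange equation, is exactly the point the paper makes immediately after the proposition.
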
 
\end{flushleft}

\begin{proof}
Function $\sigma_{\lambda}$ is defined as $\sigma(z^\lambda_\mu)$. Expanding this expression in $\lambda_\mu$ yields
\begin{align*}
 \sigma_{\lambda}  &= \sigma(z_\mu) + \lambda_\mu \dfrac{d}{d \lambda_\mu} \sigma_{\lambda} \mid_{\lambda = 0} +  O(| \lambda_\mu|^2)\\
&= \sigma + \lambda_\mu \left( \dfrac{\partial \sigma}{\partial (\partial_\alpha I)} \dfrac{\partial (\partial_\alpha I)}{\partial z_\mu }  + \dfrac{\partial \sigma}{\partial \dot{D} } \dfrac{\partial\dot{D}}{\partial z_\mu }  \right) +  O(| \lambda_\mu|^2)  \\
&= \sigma + \lambda_\mu \dfrac{\partial \sigma}{\partial z_\mu} +O(| \lambda_\mu|^2)
\end{align*}

Hence, 
\begin{equation} \label{6.25}
\dfrac{d}{d \lambda_{\eta}} \sigma_\lambda \mid_{\lambda = 0} = \dfrac{\partial }{\partial z_{\mu}} \left( \delta_{\mu \eta} \sigma \right)
\end{equation}
Here again since we have assumed a diffeomorphism with respect to a two dimensional parameter $\lambda_\eta$, the vector field in \eqref{4.2} is augmented to a second order tensor. Therefore, according to \eqref{6.25}, we take $K_{\mu \eta} = \delta_{\mu \eta} \sigma$ to satisfy the condition \eqref{4.2}.

\end{proof}

\begin{flushleft}
\justify
However, we can not apply Noether's theorem to conclude the conservation laws associated with the energy-momentum tensor since the LHS of \eqref{6.25} does not result in \eqref{4.5} in the presence of $\dot{D}_\lambda$. Nevertheless, we can establish a non-homogeneous version of equation \eqref{4.3}, that is with additional terms on the RHS. 
\end{flushleft}

\begin{flushleft}
\justify
We start out calculation with
\begin{align*}
\dfrac{d}{d \lambda_{\eta}} \sigma_\lambda \mid_{\lambda = 0} &= \dfrac{\partial \sigma}{\partial (\partial_\mu I_\lambda)} \dfrac{\partial (\partial_\mu I_\lambda)}{\partial \lambda_\eta} \mid_{\lambda = 0} + \dfrac{\partial \sigma}{\partial \dot{D}_\lambda} \dfrac{\partial \dot{D}_\lambda}{\partial \lambda_\eta} \mid_{\lambda = 0} \\
&= - \dfrac{\partial }{\partial z_{\mu} } \left(  \dfrac{\partial \sigma }{\partial (\partial_{\mu} I)} \right)  \dfrac{\partial I_\lambda}{\partial \lambda_\eta } \mid_{\lambda=0} + \dfrac{\partial  } {\partial z_{\mu}} \left(  \dfrac{\partial \sigma }{\partial (\partial_{\mu} I)}  \dfrac{\partial I_\lambda}{\partial \lambda_\eta } \mid_{\lambda=0}  \right) + \dfrac{\partial \sigma}{\partial \dot{D}} \dfrac{\partial \dot{D}_\lambda}{\partial \lambda_\eta} \mid_{\lambda = 0} \\
&= \dfrac{\partial  } {\partial z_{\mu}} \left(  \dfrac{\partial \sigma }{\partial (\partial_{\mu} I)}  \dfrac{\partial I}{\partial z_\eta}  \right) + \dfrac{\partial \sigma}{\partial \dot{D}} \dfrac{\partial \dot{D}}{\partial z_\eta}  \numberthis \label{6.26}
\end{align*}
\end{flushleft}

\begin{flushleft}
\justify
Equations \eqref{6.25} and \eqref{6.26} give us the following equation for the energy momentum tensor $T_{\mu \eta} $ (defined in \eqref{4.15})
\begin{equation*} \label{6.27}
\dfrac{\partial T_{\mu \eta}}{\partial z_\mu} = - \dfrac{\partial \sigma}{\partial \dot{D}} \dfrac{\partial \dot{D}}{\partial z_\eta}
\end{equation*}
In particular, the quantity $T_{00} = \pi $ is governed by
\begin{equation} \label{6.28}
\dfrac{\partial \pi}{\partial t} + \dfrac{\partial }{\partial x} \left(  \dfrac{\partial I}{\partial t} \dfrac{\partial \sigma}{\partial (\partial_x I)}  \right) =  \dfrac{\partial I}{\partial x}\dfrac{\partial \dot{D}}{\partial t}
\end{equation}
It is also clear that the tensor $T_{\mu \eta}$ is no longer symmetric. 
\end{flushleft}

\begin{flushleft}
\justify
The quantity $\pi$ in the dissipative case is no longer governed by a conservation law as evident by equation \eqref{6.28}. Nevertheless, a similar procedure outlined in the calculations \eqref{6.1} and \eqref{6.2} applied to the flux $\frac{\partial I}{\partial t} \frac{\partial \sigma}{\partial (\partial_x I)}$ can produce the functional $\mathcal{L}^2$ for the dissipative case. So while we no longer have conservation laws given by Noether's theorem in the presence of dissipation, we still manage to retain an infinite number of variational principles together with their corresponding Lagrangian-Hamiltonian formalism associated with functional $\mathcal{L}^i$ (see Remark 4).
\end{flushleft}

\begin{flushleft}
\justify
We summarize this section by demonstrating the structural symmetry that exists between the conversation of momentum and conversation of energy in continuum mechanics in the presence of thermal effects. 
\end{flushleft}

\begin{table}[h!]
\caption{A one-to-one correspondence is illustrated between the balance of momentum and energy in linear elastic medium with the presence of thermal effects. For simplicity we have taken stress-temperature modulus to be equal to one. }
\hspace{+12mm} {\renewcommand{\arraystretch}{4.5} %<- modify value to suit your needs
\begin{tabular}{|c|c|c|}
\hline
\textbf{Identity} & \textbf{Conservation of Momentum} & \pbox{20cm}{\textbf{Conservation of Energy}}\\\hline
Density Function &  $L = \dfrac{ \rho_0}{2} \left( \dfrac{ \partial u}{\partial t} \right)^2 - \dfrac{c}{2}  \left( \dfrac{\partial u}{\partial x} \right)^2  - \theta\dfrac{\partial u}{\partial x} $ & $\sigma =  \dfrac{ \rho_0}{2} \left( \dfrac{ \partial I}{\partial t} \right)^2  - \dfrac{c}{2} \left( \dfrac{\partial I}{\partial x} \right)^2  - \dot{D}  \dfrac{\partial I}{\partial x}  $ \\\hline
Governing Equation &  $\rho_0 \dfrac{\partial^2 u}{\partial t}  - c \dfrac{\partial^2 u} {\partial x^2}  - \dfrac{\partial \theta}{ \partial x}  = 0 $ & $\rho_0 \dfrac{\partial^2 I}{\partial t^2}   - c \dfrac{\partial^2 I} {\partial x^2}  - \dfrac{\partial \dot{D}}{ \partial x}  = 0 $  \\\hline
Action Functional & $ \int_{0}^{\tau} \int_{\mathcal{B}} L(\partial_t u, \partial_x u,x,t) dx dt$ &  $ \int_{0}^{\tau} \int_{\mathcal{B}} \sigma(\partial_t I, \partial_x I,x,t ) dx dt$ \\\hline

\end{tabular}
}
\end{table}

\begin{flushleft}
\justify
The above table establishes the following correspondence 

\begin{align*} 
u  &\longleftrightarrow I \\
\theta  &\longleftrightarrow \dot{D} 
\end{align*}

\end{flushleft}

\section{Discussion}

\begin{flushleft}
\justify

The results we have obtained in this work fall under the rubric of Rational Thermodynamics where the dissipative thermoelastic material is defined by the constitutive laws \eqref{2.2}. Within this framework we succeeded in advancing a complete analytical mechanics for the field $I$, which followed once the variational structure of the first law of thermodynamics was demonstrated. In the section we wish to highlight some main difference between our work and variational principles that are established in classical irreversible thermodynamics. Namely we look at the variational principle obtained by Onsager, which produces the well-known reciprocal relations and the general linear constitutive relations between the thermodynamic forces and fluxes.

\end{flushleft}

\begin{flushleft}
\justify

The work of Lars Onsager \cite{onsager1931reciprocal} contained one of the earliest attempts of applying variational principles to thermodynamic systems outside of equilibrium. The validity of Onsager's ``Least Dissipation of Energy Principle"  depends on the construction of a dissipation function $\Psi (\vec{X}, \vec{X})$: a quadratic form in the thermodynamic forces. Then according to Onsager, the true values of the forces $\vec{X}$, given the fluxes $\vec{J}$, must satisfy
\begin{equation} \label{7.1}
\dot{\eta} (\vec{X}) - \Psi(\vec{X}, \vec{X}) = \mathrm{max}
\end{equation}
where $\dot{\eta} (X) = \sum_i X_i J_i$ is the internal entropy production density of the system, and $\Psi (\vec{X}, \vec{X})= \sum_{i,k} \frac{1}{2} R_{ik} X_i X_k $. The stationary points of the functional \eqref{7.1} give the well-known linear kinematical constitutive relations between the fluxes $\vec{J}$ and forces $\vec{X}$, and Onsager's reciprocity condition, respectively:
\begin{equation} \label{7.2}
 J_i = \sum_k R_{ik} X_k \, \, \, \,  \mathrm{and} \, \, \, \, \, R_{ik}=R_{ki} 
\end{equation}
The Euler-Lagrange equations, therefore, do not produce the equations of motion, they determine the admissible forces (or fluxes). However, for simple systems, such as heat conduction in a solid body, it is possible to derive the temperature distribution for the \emph{stationary states} only by inserting the linear constitutive relations \eqref{7.2}$_1$ into the variational principle \eqref{7.1} \cite{gyarmati1970non}. More explicitly, consider Fourier's law for modeling heat conduction
\begin{equation*} 
\vec{q} = -k\nabla \theta
\end{equation*}
or equivalent for non-zero $\theta$
\begin{equation*} 
\vec{q} = k \,  \theta^2  \,\nabla \left( \dfrac{1}{\theta} \right)
\end{equation*}
Equation \eqref{7.2}$_1$ is satisfied with $\vec{J} = \vec{q}$, $R_{ik} = k \, \theta^2 \delta_{ik}$ and $\vec{X} = \nabla \left( \dfrac{1}{\theta} \right)$. With these identifications, \eqref{7.1} applied to a continuum $\mathcal{B}$ gives

\begin{equation*}
\delta \int_{\mathcal{B}} \dfrac{R_{kk}}{2} \left ( \nabla \dfrac{1}{\theta} \right)^2 dx = 0
\end{equation*}
Taking the variation with respect to $\theta$ and assuming variations of compact support over $\mathcal{B}$, the Euler-Lagrange equations produce
$$ \Delta \left( \dfrac{1}{\theta} \right) =0 $$
which is the stationary heat equation.
\end{flushleft}

\begin{flushleft}
\justify

Our variational principle on the other hand, as formulated in Section 6, gives the \emph{evolution} of the dissipative system and is not confined to the stationary case. However, the underlying conception behind these two approaches is very different: the least dissipation of energy principle can be viewed as an extension of the second law of thermodynamic which selects the correct stationary thermodynamic state, whereas our least action principle is based on a variational structure already inherent in the equations of motion which follows once the balance of momentum is established. As Remark (1-ii) and the example in Section 6 illustrate, the dynamical equations can be derived variationally when the variational principle is formulated in terms of the field $I$ and not the temperature $\theta$. 

\end{flushleft}

\section{Conclusion}

\begin{flushleft}
\justify
In this work we have constructed a least action principle in 1D for the nonlinear conservation of energy equation  in the absence and presence of entropy production due to heat transfer. An action functional was constructed by revealing that the rate of change in the energy flux can be written in terms of the sum of the gradient of the total energy and a dissipative term $\dot{D}$. This natural method enabled us to recast the energy equation into second order hyperbolic form. This new action $\int_{0}^{\tau} \Sigma dt$ is akin to Hamilton's principle for least action and it determines the evolution of the energy as the functional's extremals. A slew of consequences followed; namely a complete analytic mechanics for the balance of energy equation in terms of the total energy $I$.  We also showed that infinite new variational principles follow from the classic Hamilton's principle via Noether's theorem. This hierarchical structure has a simple iterative form for isentropic (conservative) systems, and is more involved for dissipative and forced systems. The formalism developed in this work was applied to the classical thermoelastic model defined by the constitutive laws \eqref{2.2}. The application of the formalism to higher dimensions and to other cases such rheological fluids and viscoelastic/plastic materials will be topics of future study.
\end{flushleft}

\begin{flushleft}

\end{flushleft}

\begin{flushleft}
\justify
\large \textbf{Acknowledgments.} \normalsize  The author thanks James Glimm for helpful discussions and encouragement. This work was supported and funded by Kuwait University Research Grant No. [ZS02/19].
\end{flushleft}
\clearpage

\appendix
\section{Appendix }

\begin{flushleft}
\justify
In this paper we have asserted that the scheme for constructing the functional $\Sigma$ can accomedate external sources and the procedure of which is similiar to that of Section 6. We, therefore, consider the most general case of the first law of thermodyanics applied to a continuum, that is the non-linear conservation of energy equation with body forces $b(x,t)$, heat sources $r(x,t)$, and heat flux $q = - k \frac{\partial \theta}{\partial x}$ \cite{dafermos2005hyperbolic}
\begin{equation} \label{A1}
\rho_0 \dfrac{\partial I}{\partial t} = \dfrac{\partial }{\partial x} \left( vS - q \right) +\rho_0 vb + \rho_0 r \qquad \text{in} \; \mathcal{B} \times [0,\tau]
\end{equation}
where the constitutive equations are given in \eqref{2.2}. Moreover, we assume through-out a non-constant elastic modulus $c = \dfrac{\partial S}{\partial (\partial_x u)}$.
\end{flushleft}

\begin{flushleft}
\justify
In addition to the construction of $L^1 = \Sigma$ for the above problem (together with its corresponding Hamiltonian formulation and Noether's theorem), we also demonstrate the construction of $L^2$ in this context.
\end{flushleft}

\begin{flushleft}
\justify
By the calculations in \eqref{6.1} and \eqref{6.2} we have
\begin{equation*} \label{A2}
\dfrac{\partial }{\partial t} \left( vS - q \right) = c \dfrac{\partial I}{\partial x} + \dot{D}
\end{equation*}
The above two equations thus give
\begin{equation}  \label{A3}
\rho_0 \dfrac{\partial^2 I}{\partial t^2} = \dfrac{\partial }{\partial x} \left( c \dfrac{\partial I}{\partial x} \right) + \dfrac{\partial \dot{D}}{\partial x} - \dot{B}
\end{equation}
where we have defined $ \dot{B} =- \rho_0 \frac{\partial }{\partial t} \left( vb - r \right)$. 
\end{flushleft}

\begin{flushleft}
\justify
As such we choose functional $\Sigma$ to be
\begin{equation*} \label{A4}
\Sigma = \int_{\mathcal{B}} \left(\frac{ \rho_0}{2} \left( \dfrac{\partial I}{\partial t} \right)^2 - \frac{c}{2} \left( \dfrac{\partial I}{\partial x} \right)^2 - \dot{D} \dfrac{\partial I}{\partial x}  - \dot{B} I \right) \, dx
\end{equation*}
Hence, equation \eqref{A3} is equivalent to the standard Euler-Lagrange equations
\begin{equation} \label{A5}
\dfrac{d }{d t} \left( \dfrac{\delta \Sigma}{\delta (\partial_t I)} \right) - \dfrac{\delta \Sigma}{\delta I } =0
\end{equation}
\end{flushleft}

\begin{flushleft}
\justify
The Hamiltonian density $\pi$ can be fashioned similarly to Sections 3 \& 6. Indeed, we can combine $\dot{B}$ and $\partial_x \dot{D}$ into one term and simply apply the procedure in Section 6 to obtain
\begin{equation*} \label{A6}
\Pi = \int_{\mathcal{B}} \left( \dfrac{1}{2\rho_0} J^2 +\dfrac{c}{2} \left( \dfrac{\partial I}{\partial x} \right)^2 + \dot{D} \dfrac{\partial I}{\partial x}  + \dot{B} I \right) d x
\end{equation*}
and

\begin{align*}
\dfrac{\partial I}{\partial t} = \dfrac{\delta \Pi}{\delta J}  \\
\dfrac{\partial J}{\partial t}   = - \dfrac{\delta \Pi}{\delta I}
\end{align*}
where $\pi$ is defined as in equation \eqref{6.18}.
\end{flushleft}

\begin{flushleft}
\justify
Equation \eqref{A3} (or equivalently equation \eqref{A5}) follows from Noether's theorem if $\sigma$ remains invariant with respect to $\lambda \longrightarrow I_\lambda = I +\lambda \overline{I}$. However, the application of Noether's theorem is not straightforward with respect to space-time translations \eqref{4.10} because of the presence of $\dot{D}$ (as we have seen in Section 6), and now $\dot{B}$. We explore this below.
\end{flushleft}

\begin{flushleft}
\justify
Under space-time translations \eqref{4.10} we obtain two more equations in addition to \eqref{6.23}

\begin{align*} 
c_\lambda &= c(z_\mu - \lambda_\mu) = c(z_\mu) + \lambda_\mu \dfrac{\partial c_{\lambda}}{\partial z_\mu} \mid_{\lambda = 0} + O(| \lambda_\mu|^2)  \\
B_\lambda &= B (z_\mu - \lambda_\mu) = B(z_\mu) + \lambda_\mu \dfrac{\partial B_{\lambda}}{\partial z_\mu} \mid_{\lambda = 0} + O(| \lambda_\mu|^2)
\end{align*}

\end{flushleft}

\begin{flushleft}
\justify
A similar calculation to the proof Proposition \eqref{prop2} reveals that $\sigma_\lambda$ obeys
\begin{equation*} \label{A9}
\dfrac{d}{d \lambda_{\eta}} \sigma_\lambda \mid_{\lambda = 0} = \dfrac{\partial }{\partial z_{\mu}} \left( \delta_{\mu \eta} \sigma \right)
\end{equation*}
\end{flushleft}

\begin{flushleft}
\justify
On the other hand 

\begin{equation*} 
\begin{aligned}
 \dfrac{d}{d \lambda_{\eta}} \sigma_\lambda \mid_{\lambda = 0} &= \dfrac{\partial \sigma}{\partial I_\lambda } \dfrac{\partial  I_\lambda}{\partial \lambda_\eta} \mid_{\lambda = 0} + \dfrac{\partial \sigma}{\partial (\partial_\mu I_\lambda)} \dfrac{\partial (\partial_\mu I_\lambda)}{\partial \lambda_\eta} \mid_{\lambda = 0} + \dfrac{\partial \sigma}{\partial c_\lambda} \dfrac{\partial c_\lambda}{\partial \lambda_\eta} \mid_{\lambda = 0} + \dfrac{\partial \sigma}{\partial \dot{D}_\lambda} \dfrac{\partial \dot{D}_\lambda}{\partial \lambda_\eta} \mid_{\lambda = 0} + \dfrac{\partial \sigma}{\partial \dot{B}_\lambda} \dfrac{\partial \dot{B}_\lambda}{\partial \lambda_\eta} \mid_{\lambda = 0} \\
 &= \left( \dfrac{\partial \sigma}{\partial I } - \dfrac{\partial }{\partial z_{\mu} } \left(  \dfrac{\partial \sigma }{\partial (\partial_{\mu} I)} \right) \right)  \dfrac{\partial I_\lambda}{\partial \lambda_\eta } \mid_{\lambda=0} + \dfrac{\partial  } {\partial z_{\mu}} \left(  \dfrac{\partial \sigma }{\partial (\partial_{\mu} I)}  \dfrac{\partial I_\lambda}{\partial \lambda_\eta } \mid_{\lambda=0}  \right)  + \cdots \\
 &= \dfrac{\partial  } {\partial z_{\mu}} \left(  \dfrac{\partial \sigma }{\partial (\partial_{\mu} I)}  \dfrac{\partial I_\lambda}{\partial \lambda_\eta } \mid_{\lambda=0}  \right) + \dfrac{\partial \sigma}{\partial c} \dfrac{\partial c_\lambda}{\partial \lambda_\eta} \mid_{\lambda = 0} + \dfrac{\partial \sigma}{\partial \dot{D}} \dfrac{\partial \dot{D}_\lambda}{\partial \lambda_\eta} \mid_{\lambda = 0} + \dfrac{\partial \sigma}{\partial \dot{B}} \dfrac{\partial \dot{B}_\lambda}{\partial \lambda_\eta} \mid_{\lambda = 0} \\
 &= \dfrac{\partial  } {\partial z_{\mu}} \left(  \dfrac{\partial \sigma }{\partial (\partial_{\mu} I)}  \dfrac{\partial I}{\partial z_\eta}  \right)+ \dfrac{\partial \sigma}{\partial c} \dfrac{\partial c}{\partial z_\eta}  + \dfrac{\partial \sigma}{\partial \dot{D}} \dfrac{\partial \dot{D}}{\partial z_\eta} +  \dfrac{\partial \sigma}{\partial \dot{B}} \dfrac{\partial \dot{B}}{\partial z_\eta} 
\end{aligned}
\end{equation*}

\end{flushleft}

\begin{flushleft}
\justify
Therefore, the second order tensor $T_{\mu \eta} = \frac{\partial \sigma}{\partial (\partial_\mu I)} \frac{\partial I}{\partial z_\eta} - \sigma \delta_{\mu \eta} $ satisfies the equation
\begin{equation*} \label{A11}
\dfrac{\partial T_{\mu \eta}}{\partial z_\mu } -  F_\eta =0
\end{equation*}
where $F_\eta =  \dfrac{\partial \sigma}{\partial c} \dfrac{\partial c}{\partial z_\eta}  + \dfrac{\partial \sigma}{\partial \dot{D}} \dfrac{\partial \dot{D}}{\partial z_\eta} +  \dfrac{\partial \sigma}{\partial \dot{B}} \dfrac{\partial \dot{B}}{\partial z_\eta} $.
\end{flushleft}

\begin{flushleft}
\justify
And so $T_{00} = \pi$ satisfies
\begin{equation} \label{A12}
\dfrac{\partial \pi}{\partial t} + \dfrac{\partial }{\partial x} \left( \dfrac{\partial I}{\partial t} \dfrac{\partial \sigma}{\partial (\partial_x I)} \right) = F_0
\end{equation}
\end{flushleft}

\begin{flushleft}
\justify
The presence of $F_0$ does not impede the construction of $L^2$ (i.e. the second iteration). We proceed as we did in the beginning of Section 5, and compute the rate of change of the flux in \eqref{A12}
\begin{align*}
\dfrac{\partial }{\partial t} \left( \dfrac{\partial I}{\partial t} \dfrac{\partial \sigma}{\partial (\partial_x I)} \right) &= - \dfrac{\partial }{\partial t} \left( \dfrac{\partial I}{\partial t} \left( c\dfrac{\partial I}{\partial x} + \dot{D} \right)  \right) \\
&= - \dfrac{\partial^2 I}{\partial t^2} \left( \dot{D} + c\dfrac{\partial I}{\partial x} \right) - \dfrac{\partial I}{\partial t} \left( \dfrac{\partial c}{\partial t} \dfrac{\partial I}{\partial x} + c \dfrac{\partial^2I}{\partial x \partial t} + \dfrac{\partial \dot{D}}{\partial t} \right)
\end{align*}
By using the evolution equation for the total energy \eqref{A3} and rearranging the terms we obtain
\begin{align} \label{A13}
\dfrac{\partial }{\partial t} \left( \dfrac{\partial I}{\partial t} \dfrac{\partial \sigma}{\partial (\partial_x I)} \right) = -\dfrac{c}{\rho_0} \dfrac{\partial }{\partial x} \left( \pi \right) + Q
\end{align}
where $Q = \frac{-1}{\rho_0} \partial_ x \left( c \partial_x I \right) \dot{D} - \frac{c}{2\rho_0} \partial_x c (\partial_x I)^2 \frac{\dot{D}}{\rho_0}\partial_x \dot{D} + \frac{c}{\rho_0} \partial_{xx} I \dot{D}+\frac{\dot{D}\dot{B}}{\rho_0} - \frac{c}{\rho_0} I \partial_x\dot{B} + \frac{2c}{\rho_0} \partial_x(I \dot{B}) - \partial_t I \partial_t c \partial_x I - \partial_t \dot{D} \partial_t I $.
\end{flushleft}

\begin{flushleft}
\justify
Equations \eqref{A13} and \eqref{A12} give us
\begin{equation*} \label{A14}
\rho_0 \dfrac{\partial^2 \pi}{\partial t^2} - \dfrac{\partial }{\partial x} \left( c \dfrac{\partial \pi}{\partial x} \right) = \rho_0 \left( \dfrac{\partial F_0}{\partial t	} - \dfrac{\partial Q}{\partial x} \right)
\end{equation*}
\end{flushleft}

\begin{flushleft}
\justify
One possible Lagrangian $L^2$ to derive \eqref{A12} from is 
\begin{equation*} \label{A15}
L^2 = \dfrac{\rho_0}{2} \left( \dfrac{\partial \pi}{\partial t} \right)^2 - \dfrac{c}{2} \left( \dfrac{\partial \pi}{\partial x} \right)^2 - \dot{C} \pi
\end{equation*} 
where $\dot{C} = - \rho_0(\partial_t F_0 - \partial_x Q) $. 
\end{flushleft}

\begin{flushleft}
\justify
The Lagrangian $L^2$ is the basis for formulating the rest of the variational framework in the second iteration. While, as we see, the detailed expressions become increasingly more complex, they can be organized in an expected pattern as we move up from one iteration to the other.
\end{flushleft}

\clearpage

\nocite{*}
\bibliographystyle{vancouver}
\bibliography{Said_Arxiv_accepted}

\begin{thebibliography}{10}

\bibitem{gregory2006classical}
Gregory RD.
\newblock Classical mechanics.
\newblock Cambridge University Press; 2006.

\bibitem{berdichevsky2009variational}
Berdichevsky V.
\newblock Variational principles of continuum mechanics: I. Fundamentals.
\newblock Springer Science \& Business Media; 2009.

\bibitem{rayleigh1945theory}
Rayleigh L. Theory of Sound. vol. 1 (reprinted 1945 by Dover, New York).
\newblock Macmillan, London; 1945.

\bibitem{onsager1931reciprocal}
Onsager L.
\newblock Reciprocal relations in irreversible processes. I.
\newblock Physical review. 1931;37(4):405.

\bibitem{biot1956thermoelasticity}
Biot MA.
\newblock Thermoelasticity and irreversible thermodynamics.
\newblock Journal of Applied Physics. 1956;27(3):240--253.

\bibitem{biot1970variational}
Biot MA.
\newblock Variational principles in heat transfer: a unified Lagrangian
  analysis of dissipative phenomena.
\newblock BIOT (MA) NEW YORK; 1970.

\bibitem{ottinger2005beyond}
{\"O}ttinger HC.
\newblock Beyond equilibrium thermodynamics.
\newblock John Wiley \& Sons; 2005.

\bibitem{kaufman1984dissipative}
Kaufman AN.
\newblock Dissipative Hamiltonian systems: a unifying principle.
\newblock Physics Letters A. 1984;100(8):419--422.

\bibitem{morrison1986paradigm}
Morrison PJ.
\newblock A paradigm for joined Hamiltonian and dissipative systems.
\newblock Physica D: Nonlinear Phenomena. 1986;18(1-3):410--419.

\bibitem{said2019lagrangian}
Said H.
\newblock A Lagrangian--Hamiltonian unified formalism for a class of
  dissipative systems.
\newblock Mathematics and Mechanics of Solids. 2019;24(4):1221--1240.

\bibitem{gurtin1964variational}
Gurtin M.
\newblock Variational principles for linear initial-value problems.
\newblock Quarterly of Applied Mathematics. 1964;22(3):252--256.

\bibitem{yang2006variational}
Yang Q, Stainier L, Ortiz M.
\newblock A variational formulation of the coupled thermo-mechanical
  boundary-value problem for general dissipative solids.
\newblock Journal of the Mechanics and Physics of Solids. 2006;54(2):401--424.

\bibitem{kane2000variational}
Kane C, Marsden JE, Ortiz M, West M.
\newblock Variational integrators and the Newmark algorithm for conservative
  and dissipative mechanical systems.
\newblock International Journal for Numerical Methods in Engineering.
  2000;49(10):1295--1325.

\bibitem{bloch1996euler}
Bloch A, Krishnaprasad P, Marsden JE, Ratiu TS.
\newblock The Euler-Poincar{\'e} equations and double bracket dissipation.
\newblock Communications in mathematical physics. 1996;175(1):1--42.

\bibitem{gay2017lagrangian}
Gay-Balmaz F, Yoshimura H.
\newblock A Lagrangian variational formulation for nonequilibrium
  thermodynamics. Part II: continuum systems.
\newblock Journal of Geometry and Physics. 2017;111:194--212.

\bibitem{PhysRevLett.110.174301}
Galley CR.
\newblock Classical Mechanics of Nonconservative Systems.
\newblock Phys Rev Lett. 2013 Apr;110:174301.
\newblock Available from:
  \url{https://link.aps.org/doi/10.1103/PhysRevLett.110.174301}.

\bibitem{riewe1997mechanics}
Riewe F.
\newblock Mechanics with fractional derivatives.
\newblock Physical Review E. 1997;55(3):3581.

\bibitem{kalpakides2004canonical}
Kalpakides V, Maugin G.
\newblock Canonical formulation and conservation laws of thermoelasticity
  without dissipation.
\newblock Reports on Mathematical Physics. 2004;53(3):371--391.

\bibitem{finlayson2013method}
Finlayson BA.
\newblock The method of weighted residuals and variational principles. vol.~73.
\newblock SIAM; 2013.

\bibitem{green1993thermoelasticity}
Green A, Naghdi P.
\newblock Thermoelasticity without energy dissipation.
\newblock Journal of elasticity. 1993;31(3):189--208.

\bibitem{straughan2011heat}
Straughan B.
\newblock Heat waves. vol. 177.
\newblock Springer Science \& Business Media; 2011.

\bibitem{marsden1994mathematical}
Marsden JE, Hughes TJ.
\newblock Mathematical foundations of elasticity.
\newblock Courier Corporation; 1994.

\bibitem{giaquinta2011mathematical}
Giaquinta M, Modica G.
\newblock Mathematical analysis: foundations and advanced techniques for
  functions of several variables.
\newblock Springer Science \& Business Media; 2011.

\bibitem{dal2002linearized}
Dal~Maso G, Negri M, Percivale D.
\newblock Linearized elasticity as $\Gamma$-limit of finite elasticity.
\newblock Set-Valued Analysis. 2002;10(2-3):165--183.

\bibitem{jou1996extended}
Jou D, Casas-V{\'a}zquez J, Lebon G.
\newblock Extended irreversible thermodynamics.
\newblock In: Extended Irreversible Thermodynamics. Springer; 1996. p. 41--74.

\bibitem{gyarmati1970non}
Gyarmati I, et~al.
\newblock Non-equilibrium thermodynamics.
\newblock Springer; 1970.

\bibitem{dafermos2005hyperbolic}
Dafermos C.
\newblock Hyperbolic conservation laws in continuum physics. vol.~3.
\newblock Springer; 2005.

\end{thebibliography}

\end{document}